\newtheoremstyle{sig}
  {}
  {}
  {\itshape}
  {}
  {\scshape}
  {.}
  {.5em}
  {#1\@ifnotempty{#2}{ #2}\thmnote{\quad(#3)}}%
\theoremstyle{sig}
\tikzstyle{printersafe}=[snake=snake,segment amplitude=0 pt]
\newlist{longitemize}{itemize}{15}
\setlist[longitemize,1]{label=\textbullet}
\setlist[longitemize,2]{label=\textbullet}
\setlist[longitemize,3]{label=\textbullet}
\setlist[longitemize,4]{label=\textbullet}
\setlist[longitemize,5]{label=\textbullet}
\setlist[longitemize,6]{label=\textbullet}
\setlist[longitemize,7]{label=\textbullet}
\setlist[longitemize,8]{label=\textbullet}
\setlist[longitemize,9]{label=\textbullet}
\setlist[longitemize,10]{label=\textbullet}
\setlist[longitemize,11]{label=\textbullet}
\setlist[longitemize,12]{label=\textbullet}
\setlist[longitemize,13]{label=\textbullet}
\setlist[longitemize,14]{label=\textbullet}
\setlist[longitemize,15]{label=\textbullet}
\newcommand{\techReportAppendix}[1]{%
	\ifthenelse{\equal{\shortVersion}{true}}%
		{\cite{techReport}}%
        {Appendix~\ref{#1}}%
}%
\newcommand{\techReportAppendices}[2]{%
	\ifthenelse{\equal{\shortVersion}{true}}%
		{\cite{techReport}}%
		{Appendices~\ref{#1}--\ref{#2}}%
}%
\newcommand{\onlyTechReport}[1]{%
	\ifthenelse{\equal{\shortVersion}{true}}%
		{}%
		{#1}%
}%
\newcommand{\onlyShortVersion}[1]{%
	\ifthenelse{\equal{\shortVersion}{true}}%
		{#1}%
		{}%
}%
\newcommand{\mypara}[1]{\smallskip\noindent\emph{\textbf{{#1:}}}}
\newcommand{\ie}{\emph{i.e.},\xspace}
\newcommand{\eg}{\emph{e.g.},\xspace}
\definecolor{ao(english)}{rgb}{0.0, 0.5, 0.0}
\definecolor{royalblue(web)}{rgb}{0.25, 0.41, 0.88}
\newcommand{\setCommentColor}[1]{%
	\ifthenelse{\equal{#1}{bk}}%
		{\colorlet{colorVar}{red!50}}%
		{\ifthenelse{\equal{#1}{kg}}%
			{\colorlet{colorVar}{blue}}%
			{\ifthenelse{\equal{#1}{mg}}%
				{\colorlet{colorVar}{ao(english)}}%
			{\ifthenelse{\equal{#1}{jr}}%
				{\colorlet{colorVar}{magenta}}%
				{\ifthenelse{\equal{#1}{zw}}%
					{\colorlet{colorVar}{orange}}%
					{\ifthenelse{\equal{#1}{gm}}%
						{\colorlet{colorVar}{cyan}}
						{}
					}%
				}%
			}%
		}%
	}%
}
\newcommand{\commentAuthor}[1]{%
	\ifthenelse{\equal{#1}{bk}}%
		{Boris:\ }%
		{\ifthenelse{\equal{#1}{kg}}%
			{Klaus:\ }%
			{\ifthenelse{\equal{#1}{mg}}%
				{Marco:\ }%
			{\ifthenelse{\equal{#1}{jr}}%
				{Jan:\ }%
				{\ifthenelse{\equal{#1}{zw}}%
					{Zilong:\ }%
					{\ifthenelse{\equal{#1}{gm}}%
						{Gideon:\ }%
						{}
					}%
				}%
			}%
		}%
	}%
}
\definecolor{codegray}{gray}{0.95}
\theoremstyle{definition}
\newtheorem{definition}{Definition}
\newtheorem{example}{Example}
\theoremstyle{theorem}
\newcommand{\powerset}[1]{2^{#1}}
\newcommand{\itype}{\textit{ITYPE}}
\newcommand{\liT}{\texttt{li}}
\newcommand{\divT}{\texttt{div}}  
\newcommand{\imm}{\texttt{imm}}
\newcommand{\atoms}{\textit{Atoms}}
\newcommand{\myvalue}[1]{\llbracket#1\rrbracket}
\newcommand{\archStates}{\textsc{ArchState}}
\newcommand{\uarchStates}{\mu\textsc{ArchState}}
\newcommand{\archStep}{\textsc{Isa}}
\newcommand{\archEval}{\archStep^*}
\newcommand{\implStates}{\textsc{ImplState}}
\newcommand{\initImplStates}{\textsc{InitState}}
\newcommand{\implStep}{\textsc{Impl}}
\newcommand{\implEval}{\implStep^*}
\newcommand{\implEvalFilter}[1]{\implStep^*| #1}
\newcommand{\archProj}[1]{{#1}_{\textsc{Arch}}}
\newcommand{\uarchProj}[1]{{#1}_{\textsc{µArch}}}
\newcommand{\ctrObs}{\textsc{CtrObs}}
\newcommand{\atkObs}{\textsc{AtkObs}}
\newcommand{\ctrsat}[2]{\arch, \uarch \vdash \textsc{CtrSat}(#1,#2)}
\newcommand{\uarchctrsat}[3]{\uarch \vdash_{#3} \textsc{CtrSat}(#1,#2)}
\newcommand{\Nat}{\mathbb{N}}
\newcommand{\kywd}[1]{\mathbf{#1}}
\definecolor{Blue3}{HTML}{0000CD}
\newcommand{\obsKywd}[1]{\textcolor{Blue3}{\mathtt{#1}}}
\newcommand{\startObsKywd}[1]{\obsKywd{start}}
\newcommand{\commitObsKywd}[1]{\obsKywd{commit}} 
\newcommand{\rollbackObsKywd}[1]{\obsKywd{rollback}}
\newcommand{\commitObs}[1]{\commitObsKywd{}} %
\newcommand{\rollbackObs}[1]{\rollbackObsKywd{}} %
\newcommand{\ctr}{\textsc{Ctr}}
\newcommand{\fetch}[1]{
\ifthenelse{\equal{#1}{}}{\kywd{fetch}}{\kywd{fetch}}%
}
\definecolor{Blue3}{HTML}{0000CD}
\definecolor{Green4}{HTML}{008B00}
\definecolor{Red3}{HTML}{CD0000}
\definecolor{orange}{rgb}{0.8, 0.47, 0.196}
\lstdefinestyle{Cstyle}
{
	frame = tb,
  belowskip=.4\baselineskip,
  aboveskip=.4\baselineskip,
  	showstringspaces = false,
  	breaklines = true,
  	breakatwhitespace = true,
  	tabsize = 3,
  	numbers = left,
    stepnumber = 1,
    numberstyle = \tiny\color{gray},
    language = {[ANSI]C},
    alsoletter={.\$},
    basicstyle={\ttfamily\color{black}},
    keywordstyle={\ttfamily\color{Blue3}},
    keywordstyle=[2]{\ttfamily\color{Green4}},
    keywordstyle=[3]{\ttfamily\color{orange}},
    keywordstyle=[4]{\ttfamily\color{violet}},
    otherkeywords = {skip,not},
    morekeywords = [2]{A,B},
    morekeywords = [3]{},
	morekeywords = [4]{y,x,z,w, size,size_A,k,temp},
	morecomment=[l][\small\itshape\color{purple!40!black}]{//},
	sensitive=true,
}
\newcommand*{\SavedLstInline}{}
\LetLtxMacro\SavedLstInline\lstinline
\DeclareRobustCommand*{\lstinline}{%
  \ifmmode
    \let\SavedBGroup\bgroup
    \def\bgroup{%
      \let\bgroup\SavedBGroup
      \hbox\bgroup
    }%
  \fi
  \SavedLstInline
}
\newcommand*\@lbracket{[}
\newcommand*\@rbracket{]}
\newcommand*\@colon{:}
\newcommand*\colorIndex{%
    \edef\@temp{\the\lst@token}%
    \ifx\@temp\@lbracket \color{black}%
    \else\ifx\@temp\@rbracket \color{black}%
    \else\ifx\@temp\@colon \color{black}%
    \else \color{orange}%
    \fi\fi\fi
}
\lstdefinestyle{verilog-style}
{
    language=Verilog,
    basicstyle=\footnotesize\ttfamily,
    keywordstyle=\color{blue},
    identifierstyle=\color{black},
    commentstyle=\color{cadmiumgreen},
    numbers=left,
	morekeywords = [2]{leak, on, monitor},
	keywordstyle = [2]\color{punct}\ttfamily,
    numberstyle=\tiny\color{black},
    extendedchars=true,
    numbersep=10pt,
    tabsize=4,
	frame=none,
    moredelim=*[s][\colorIndex]{[}{]},
    literate=*{:}{:}1,
    xleftmargin=5.0ex,
    captionpos=b,
    escapechar=\$,
    escapeinside={(*}{*)}
}
\colorlet{punct}{red!60!black}
\definecolor{delim}{RGB}{20,105,176}
\colorlet{numb}{magenta!60!black}
\definecolor{lightgreen}{HTML}{669900}		%
\definecolor{bluegreen}{HTML}{33997e}		%
\definecolor{brightube}{rgb}{0.82, 0.62, 0.91}
\definecolor{aquamarine}{rgb}{0.5, 1.0, 0.83}
\definecolor{cadmiumgreen}{rgb}{0.0, 0.42, 0.24}
\newcolumntype{R}[2]{%
    >{\adjustbox{angle=#1,lap=\width-(#2)}\bgroup}%
    l%
    <{\egroup}%
}
\newcommand*\rot{\multicolumn{1}{R{45}{1em}}}
\newcommand{\tool}{\textsc{LeaSyn}}
\newcommand{\arch}{\archStep} %
\newcommand{\uarch}{\implStep} %
\newcommand{\isasat}[3]{#1 \equiv_{#3} #2}
\renewcommand{\isasat}[3]{#1 \vdash_{#3} #2}
\newcommand{\atk}{\textsc{Atk}}
\renewcommand{\contract}{\textsl{\textsc{LC}}}
\newcommand{\template}{\mathbb{T}}
\newcommand{\revision}[1]{{#1}}
\crefname{section}{\S}{\S\S}
\crefname{subsection}{\S}{\S\S}
\title{Synthesis of Sound and Precise Leakage Contracts for Open-Source RISC-V Processors}
\author{Zilong Wang}
\affiliation{%
  \institution{IMDEA Software Institute}
  \city{Madrid}
  \country{Spain}
}
\email{zilong.wang@imdea.org}
\author{Gideon Mohr}
\affiliation{%
  \institution{Saarland University}
  \city{Saarbr\"ucken}
  \country{Germany}
}
\email{s8gimohr@stud.uni-saarland.de}
\author{Klaus von Gleissenthall}
\affiliation{%
  \institution{Vrije Universiteit Amsterdam}
  \city{Amsterdam}
  \country{Netherlands}
}
\email{k.freiherrvongleissenthal@vu.nl}
\author{Jan Reineke}
\affiliation{%
\institution{Saarland University}
\city{Saarbr\"ucken}
\country{Germany}
}
\email{reineke@cs.uni-saarland.de}
\author{Marco Guarnieri}
\affiliation{%
\institution{IMDEA Software Institute}
\city{Madrid}
\country{Spain}
}
\email{marco.guarnieri@imdea.org}
\begin{document}
\begin{abstract}
    Leakage contracts have been proposed as a new security abstraction at the instruction set architecture level. 
    Leakage contracts aim to capture the information that processors may leak via microarchitectural side channels.
    Recently, the first tools have emerged to verify whether a processor satisfies a given contract.
    However, coming up with a contract that is both sound and precise for a given processor is challenging, time-consuming, and error-prone, as it requires in-depth knowledge of the timing side channels introduced by microarchitectural optimizations. %

    In this paper, we address this challenge by proposing \tool{}, the first tool for automatically synthesizing leakage contracts that are both \emph{sound} and \emph{precise} for processor designs at register-transfer level.
    Starting from a user-provided contract template that captures the space of possible contracts, %
    \tool{} automatically constructs a contract, alternating between 
        contract synthesis, which ensures precision based on an empirical characterization of the processor's leaks, 
    and contract verification, which ensures soundness.\looseness=-2 

    Using \tool{}, we automatically synthesize contracts for six open-source RISC-V CPUs for a variety of contract templates.
    Our experiments indicate that \tool{}'s contracts are sound and more precise (\ie represent the actual leaks in the target processor more faithfully) than contracts constructed by existing approaches.\looseness=-1
\end{abstract}

\keywords{Side channels, hardware verification, leakage contracts}

\Crefformat{section}{\S#2#1#3}
\crefformat{section}{\S#2#1#3}
\Crefformat{subsection}{\S#2#1#3}
\crefformat{subsection}{\S#2#1#3}
\Crefformat{subsubsection}{\S#2#1#3}
\crefformat{subsubsection}{\S#2#1#3}

\crefrangeformat{section}{\S#3#1#4--#5#2#6}
\Crefrangeformat{section}{\S#3#1#4--#5#2#6}
\crefrangeformat{subsection}{\S#3#1#4--#5#2#6}
\Crefrangeformat{subsection}{\S#3#1#4--#5#2#6}
\crefrangeformat{subsubsection}{\S#3#1#4--#5#2#6}
\Crefrangeformat{subsubsection}{\S#3#1#4--#5#2#6}

\newcommand{\shortVersion}{false}

\maketitle

\begin{figure*}[ht] 
    ~\hfill\includegraphics[width=\textwidth]{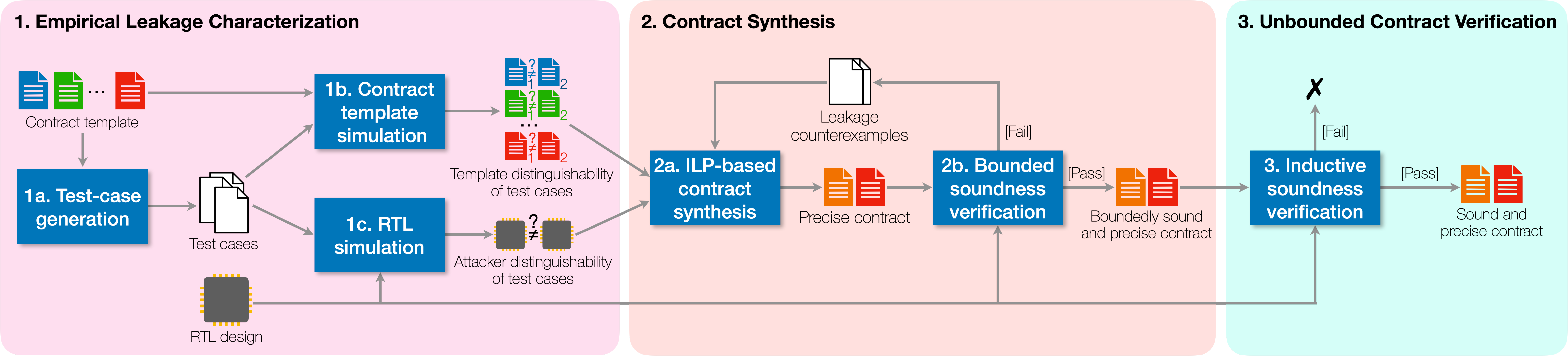}
    \vspace{-17pt}
    \caption{Overview of \tool{} synthesis approach}
    \label{fig:approach-overview}
    \vspace{-11pt}
  \end{figure*}
\section{Introduction}\label{sec:introduction}

Microarchitectural attacks~\cite{Aciicmez07,RistenpartTSS09,Yarom14,Liu15,Yarom2017,spectre2019,ret2spec2018,stecklina2018lazyfp,Aldaya2019,fallout2019,zombieload2019,ridl2019,Weber0NSR21,GerlachWZS23,PurnalBPV23} rely on subtle but measurable hardware side effects on a computation's execution time to compromise otherwise secure programs.
The exploited side effects  result from optimizations---like caching and speculative execution---implemented in a CPU's microarchitecture.
Defending against such attacks is challenging since they are ``invisible'' at the level of the instruction set architecture (ISA), the traditional hardware-software interface familiar to programmers.

\emph{Leakage contracts}~\cite{contracts2021} have recently been proposed as a new security abstraction at ISA-level to fill this gap and to serve as a rigorous foundation for writing programs that are resistant against microarchitectural attacks.
In a nutshell, a leakage contract specifies at ISA-level what information a program may leak via microarchitectural side channels.
Leakage contracts have been successfully applied to reason about the security of hardware~\cite{Wang23,oleksenko2023hide,10.1145/3676641.3716247} and software~\cite{future-mu-archs,spectector2020,fabian202automatic,10.1145/3704867,patrignani2021exorcising} against microarchitectural leaks.\looseness=-1

To prevent microarchitectural attacks, programmers can use leakage contracts as a guideline for secure programming by ensuring that secret information does not influence any of the leakage sources identified in the contracts.
To be useful for this, however, leakage contracts should be \emph{sound}, \ie they should capture all leaks in the underlying processor, and \emph{precise}, \ie they should expose as part of the contract only the information that is actually leaked.
Unsound contracts are problematic since they could lead to insecure code.
At the same time, imprecise contracts may unnecessarily rule out programs as ``insecure'' and introduce performance overheads, as they lead to overly defensive code.

Deriving a sound and precise leakage contract for a given processor is challenging.
Manually constructing a leakage contract for a modern processor is a time-consuming~\cite{speculation-at-fault} and error-prone endeavour: 
it requires a deep understanding of the processor's microarchitecture and the leakage introduced by its optimizations.
Automatic synthesis of leakage contracts promises to address the
impracticalities of manually constructing a contract~\cite{Mohr24,conjunct,h-houdini}. %
However, existing approaches for synthesis are limited:
The approach by Mohr et al.~\cite{Mohr24} can synthesize a precise contract directly from a register-transfer level (RTL) processor design, but it cannot guarantee the contract's soundness, \ie the contract may miss leaks, compromising its usefulness for secure programming.
The approach by Dinesh et al.~\cite{conjunct,h-houdini}, on the other hand, synthesizes sound but (very) imprecise contracts, \eg the synthesized contracts entirely rule out the use of any branch or memory instructions.
Wang et al.~\cite{Wang23} have proposed \textsc{LeaVe}, a tool that can verify the soundness of  leakage contracts against RTL designs, and they have used \textsc{LeaVe} to synthesize sound contracts with a human in the loop.
As we show in \Cref{sec:evaluation}, the resulting contracts are still fairly imprecise.

In this paper, we propose a methodology to automatically synthesize leakage contracts that are  \emph{sound} and \emph{precise} given an RTL processor design.
Next, we describe our main contributions.

\mypara{Synthesis methodology}
We propose a methodology (\Cref{sec:synthesis}) to synthesize sound and precise contracts given an RTL processor design. %
Our methodology starts from (a) the RTL design of the target processor, and (b) a user-provided \emph{contract template} that consists of a set of \emph{contract atoms}, each capturing a potential instruction-level leak. 
Crucially, the template defines the search space for contract synthesis and any subset of the template's atoms is a candidate contract.
Figure~\ref{fig:approach-overview} illustrates our approach, which consists of three phases:\looseness=-1
\begin{compactenum}[(1)]
    \item \textit{Empirical Leakage Characterization:} 
    We start by characterizing a processor's leakage based on a set of test cases (\Cref{sec:synthesis:testing}), where a test case consists of a pair of programs with their corresponding inputs.
    These test cases are automatically generated based on the given contract template, and they aim at exercising the processor to identify leaks.
    The test cases are classified in terms of {attacker distinguishability}, \ie whether the attacker can distinguish the pair of executions or not, and template distinguishability, \ie which sets of atoms from the template distinguish each test case.

    \item \textit{Contract Synthesis:} 
    We build a candidate contract for the target processor by alternating between a synthesis phase (\Cref{sec:synthesis:synthesis}) and a verification phase (\Cref{sec:synthesis:bounded-verification}).
    We start by synthesizing  the most precise candidate contract that captures all leaks exposed in the test cases, based on the attacker and template distinguishability relations.
    To ensure that the candidate contract is sound, \ie it captures all leaks, even those not exercised by the initial test cases, we perform a bounded verification step.
    Whenever we find a leakage counterexample, \ie an attacker-distinguishable test case that is contract indistinguishable under the current candidate contract, this counterexample is added to the set of test cases, and a new contract candidate is synthesized.

    \item \textit{Unbounded Contract Verification:}
    When we cannot find further counterexamples with bounded verification, we perform a final {unbounded verification} step to ensure that indeed all possible leaks in the target processor are captured by the candidate contract (\Cref{sec:synthesis:unbounded-verification}).
    If this passes, the contract is provably sound for the target processor, and we return it to the user.
\end{compactenum}
In contrast to prior approaches for contract synthesis~\cite{Mohr24,conjunct,h-houdini}, our methodology ensures that the synthesized contract is both \emph{sound}, \ie it captures all leaks, and \emph{precise}, \ie it distinguishes as few attacker-indistinguishable test cases as possible.

\mypara{\tool{} synthesis tool}
We implement our methodology in \tool{} (\Cref{sec:implementation}), a tool for synthesizing leakage contracts from processor designs in Verilog.
\tool{} implements all the steps of our methodology: it uses integer-linear programming for contract synthesis, a bounded model checker for bounded soundness verification, and the \textsc{LeaVe} contract verification tool~\cite{Wang23} for unbounded soundness verification.
\tool{} also implements a family of different contract templates capturing instruction-level leaks, which are implemented of top of the RISC-V Formal Interface~\cite{rvfi} to simplify applying \tool{} to new targets.

\mypara{Evaluation}
We validate our methodology by synthesizing  leakage contracts for six open-source RISC-V processors from three different core families (\Cref{sec:evaluation}).
We do so by (a) defining a variety of contract templates capturing different classes of instruction-level leaks (e.g., leaks through control flow, variable-latency instructions, and memory accesses), and (b) using \tool{} to synthesize the most precise and sound contracts satisfied by the studied target processors, against an attacker that observes when instructions retire.
Our experiments confirm that \tool{} can successfully synthesize sound and precise leakage contracts for all of our targets in less than 48 hours. %
Furthermore, the contracts synthesized by \tool{} are significantly more precise than existing  contracts derived either manually~\cite{Wang23} or automatically~\cite{conjunct,h-houdini,rtl2mupath,Mohr24}.

\mypara{Summary}
To summarize, we make the following contributions:
\begin{compactitem}
    \item We propose a methodology for synthesizing sound and precise leakage contracts for RISC-V processors at RTL. 
    \item We implement our methodology in a tool called \tool{}.
    \item We evaluate \tool{} on six open-source RISC-V processors, demonstrating that it can efficiently synthesize sound and precise  contracts.
    \item We show that the contracts synthesized by \tool{} are significantly more precise than those derived in prior work.
\end{compactitem}

\section{Overview}\label{sec:overview}

We now present the core aspects of our approach with an example. 
\subsection{A simple processor}\label{sec:overview:processor}

We start by introducing a simple ISA and a CPU implementing
it.\looseness=-1

\mypara{Instruction Set}
We consider a simple instruction set $\arch$ that supports loading immediate
values into registers and performing (integer) divisions between them.
The architectural state consists of $n$ 32-bit registers, representing
signed integers, which we  identify using $\texttt{R1}, \ldots, \texttt{Rn}$.
It also contains a dedicated register $\mathtt{PC}$ holding the program counter. 
The instruction set consists of two instructions: 
\begin{asparaenum}
  \item The \textbf{load immediate} instruction \liT\ \texttt{RD}, \texttt{imm} sets the
  value of the destination register $\mathtt{RD}$ to $\mathtt{imm}$.
  $\mathtt{RD}$ is a register identifier in $\mathtt{R1}, \ldots, \mathtt{Rn}$
  and $\mathtt{imm}$ is a 32-bit value.
  \item The \textbf{division} instruction \divT\ \texttt{RD}, \texttt{RS1}, \texttt{RS2} divides $\mathtt{RS1}$ by $\mathtt{RS2}$ and stores the result in $\mathtt{RD}$.
  $\mathtt{RD}, \mathtt{RS1}, \mathtt{RS2}$ are register identifiers in
  $\mathtt{R1}, \ldots, \mathtt{Rn}$.
  Inspired by the RISC-V ISA~\cite{Waterman:EECS-2014-54}, upon a division by zero, \ie when the value of $\mathtt{RS2}$
  is $0$, the value of $\mathtt{RD}$ is set to $-1$.
\end{asparaenum}

\mypara{Processor}
We consider a simple two-stage pipelined implementation $\uarch$ of $\arch$.
The \emph{fetch/decode stage} takes one cycle.
It loads the current instruction at the address in $\mathtt{PC}$, decodes it into instruction type and operands, and increments the program counter.

The \emph{execute stage} works as follows:
\begin{asparaitem}
  \item Load immediate instructions \liT\ \texttt{RD}, \texttt{imm} are executed in a single
  cycle by writing the  value $\mathtt{imm}$ to the destination register
  $\mathtt{RD}$.
  \item Division instructions \divT\ \texttt{RD}, \texttt{RS1}, \texttt{RS2} are handled by a dedicated division unit. 
  The unit takes 32 cycles to execute all divisions except when  $\mathtt{RS2}$ is $0$ or $1$, in which case it takes 1 cycle. 
\end{asparaitem}

Whenever the execute stage takes more than one cycle, the fetch/decode stage gets
stalled.

\subsection{Modeling leaks with leakage contracts}\label{sec:overview:contracts}

Next, we show how leakage contracts can be used to capture---at ISA level---the leaks existing in the CPU  $\uarch$ from~\Cref{sec:overview:processor}. 

\mypara{Leakage}
Executing instructions on $\uarch$ can leak information about the value of some of the registers.
For instance, consider an attacker $\atk$ that  observes at which cycles $\uarch$ retires each instruction.
The attacker can distinguish between the following pairs of program executions, where we represent each execution by a pair $\langle p \mid \sigma \rangle$ 
consisting of a program $p$ and an initial state $\sigma$:
\begin{compactenum}
  \item $\langle \mathtt{div}\ \texttt{R1}, \texttt{R2}, \texttt{R3} \mid \sigma_1 \rangle$ \emph{vs.} $\langle \mathtt{div}\ \texttt{R1}, \texttt{R2}, \texttt{R3} \mid \sigma_2 \rangle$, for $\sigma_1, \sigma_2$, s.t. $\sigma_1(\texttt{R3}) = 2$ and $\sigma_2(\texttt{R3}) = 1$. 
  The two executions are attacker distinguishable since executing $\mathtt{div}$ takes 33 cycles (1 cycle for fetch/decode and 32 for the execution stage) in the first case, but only 2 cycles in the second case.
  \item $\langle \mathtt{div}\ \texttt{R1}, \texttt{R2}, \texttt{R3} \mid \sigma_1 \rangle$ \emph{vs.} $\langle \mathtt{li}\ \texttt{R1}, \mathtt{0x0} \mid \sigma_3 \rangle$ for some $\sigma_3$, are also attacker distinguishable since $\mathtt{div}$ takes 33 cycles starting from $s_1$ whereas $\mathtt{li}$ only takes 2 cycles.
\end{compactenum}
The attacker, however, cannot distinguish the following executions:
\begin{compactenum}
  \item $\langle \mathtt{div}\ \texttt{R1}, \texttt{R2}, \texttt{R3} \mid \sigma_1 \rangle$ \emph{vs.} $\langle \mathtt{div}\ \texttt{R1}, \texttt{R2}, \texttt{R3} \mid \sigma_4 \rangle$, for $\sigma_4$ with $\sigma_4(\texttt{R3}) = 8$, since $\mathtt{div}$ will take 33 cycles in both cases, and
  \item $\langle \mathtt{div}\ \texttt{R1}, \texttt{R2}, \texttt{R3} \mid \sigma_2 \rangle$ \emph{vs.} $\langle \mathtt{li}\ \texttt{R1}, \mathtt{0x0} \mid \sigma_3 \rangle$, since executing both $\mathtt{div}$ (since $r_3$ is $0$) and  $\mathtt{li}$ take 2 cycles.
\end{compactenum}

\mypara{Leakage contracts}
Securely programming $\uarch$ requires understanding which program executions an attacker might distinguish to ensure that secret data is not leaked.
Leakage contracts~\cite{contracts2021,Wang23} provide a way of characterizing such leaks at ISA-level, thereby enabling secure programming.
For this, contracts map architectural executions, \ie executions according to the $\arch$, to \emph{contract traces} that capture what information might be leaked through side channels.\looseness=-1

A contract is \emph{sound} for a CPU (or, equivalently, the CPU satisfies the contract), if the contract captures \emph{all} leaks in the  CPU, that is, any two program executions that are distinguishable by $\atk$ must result in different contract traces.
For instance, $\uarch$ satisfies a contract that exposes all the operands of executed $\mathtt{div}$ instructions, since distinguishable executions like the ones mentioned above would be mapped to different contract traces. 
While sound contracts are a prerequisite for secure programming, not all sound contracts are useful.
Indeed, consider a contract that exposes the operands of \emph{all} instructions.
Despite being sound for $\uarch$, this contract is useless from a secure programming perspective, since it only admits programs that do not process any secrets! %
This contract is imprecise since it over-approximates the actual leaks in $\uarch$. %

\subsection{Synthesizing sound and precise contracts with \tool{}}\label{sec:overview:synthesis}
\Cref{fig:approach-overview} depicts \tool{}'s approach  to synthesize sound and precise  contracts given an RTL processor. %
The approach takes as input:
\begin{compactenum}[(1)]
\item The RTL design of the processor under synthesis $\uarch$.
\item A user-provided \emph{contract template} $\template$ consisting of a set of \emph{contract atoms} (\Cref{sec:contracts}), each capturing potential instruction-level leaks. 
The template defines the search space for contract synthesis and any subset of its atoms is a candidate contract.\looseness=-1
\end{compactenum}
Based on these inputs, \tool{} outputs a contract $\ctr$ that is (a) \emph{sound}, \ie it captures \emph{all} leaks in the target CPU $\uarch$, and (b) \emph{as precise as possible}, \ie among all sound contracts in the  template it distinguishes the fewest of the attacker-indistinguishable test cases explored during synthesis.
Next, we describe the contract template for our  example, and then the three phases of our approach.\looseness=-1

\mypara{Contract template}
We consider a contract template that can expose as part of the contract trace three kinds of information: (1) the value $\myvalue{\imm}$ of the immediate $\mathtt{imm}$ in load-immediate instructions, (2) the destination and source registers used in $\divT$ instructions (denoted as $\myvalue{\texttt{RD}}, \myvalue{\texttt{RS1}}, \myvalue{\texttt{RS2}}$), and (3) the values of these registers (denoted as $\myvalue{\texttt{Reg[RD]}}$, $\myvalue{\texttt{Reg[RS1]}}$, and $\myvalue{\texttt{Reg[RS2]}}$).

Furthermore, an observation corresponding to the instruction type $\itype = \{\liT, \divT\}$ can be added to the trace whenever an instruction of that type is executed. 

Therefore, the set of atoms in our template is as follows:
\begin{align*}
  \atoms\ =\ \hspace{-5mm} &\hspace{5mm} \{(\liT, \langle\texttt{"\liT"}, -\rangle), (\divT, \langle\texttt{"\divT"}, -\rangle), (\liT, \langle\texttt{"\imm"}, \myvalue{\imm}\rangle), \\
    & (\divT, \langle\texttt{"RD"}, \myvalue{\texttt{RD}}\rangle), (\divT, \langle\texttt{"Reg[RD]"}, \myvalue{\texttt{Reg[RD]}}\rangle), \\
    & (\divT, \langle\texttt{"RS1"}, \myvalue{\texttt{RS1}}\rangle), (\divT, \langle\texttt{"Reg[RS1]"}, \myvalue{\texttt{Reg[RS1]}}\rangle), \\ 
    & (\divT, \langle\texttt{"RS2"}, \myvalue{\texttt{RS2}}\rangle), (\divT, \langle\texttt{"Reg[RS2]"}, \myvalue{\texttt{Reg[RS2]}}\rangle)\}
\end{align*} 
For instance, atom $(\liT, \langle\texttt{"\liT"}, -\rangle)$ adds to the trace the tuple\linebreak $\langle \texttt{"\liT"}, -\rangle$ whenever a load immediate instruction is executed, \linebreak whereas atom $(\divT, \langle\texttt{"Reg[RS2]"}, \myvalue{\texttt{Reg[RS2]}}\rangle)$ exposes the tuple $\langle \texttt{"Reg[RS2]"},$ $\myvalue{\texttt{Reg[RS2]}} \rangle$ containing the value of the second source operand whenever a $\divT$ instruction is executed.

\mypara{Phase 1: Empirical leakage characterization} 
To empirically characterize a processor's leakage, \tool{} first generates a set of test cases  $T$, where each test case consists of a pair of ISA-level programs with associated data inputs.
Then, it simulates the execution of the test cases on the target processor~$\uarch$ and determines which test cases are \emph{attacker distinguishable} and which are not.

To allow the subsequent synthesis step to associate leakage with particular contract atoms in the template, \tool{} also simulates each test case on the contract template $\template$.
The result of this simulation is condensed into the \emph{template distinguishability} of each test case, which compactly captures which sets of contract atoms distinguish a particular test case and which do not.

In our example, we consider an attacker $\atk$ that observes when instructions retire and we systematically generate the following test cases, each consisting of a pairs of program executions:
\begin{compactitem}
  \item[($T_1$)] $\langle\liT\ \texttt{R1}, \texttt{0x1234} \mid \sigma\rangle$ \textit{vs.} $\langle\liT\ \texttt{R1}, \texttt{0x5678} \mid \sigma\rangle$ for some $\sigma$.
  \item[($T_2$)] $\langle\liT\ \texttt{R1}, \texttt{0x1234} \mid \sigma\rangle$ \textit{vs.} $\langle\liT\ \texttt{R2}, \texttt{0x1234} \mid \sigma\rangle$ for some $\sigma$.
  \item[($T_3$)] $\langle\liT\ \texttt{R1}, \texttt{0x1234} \mid \sigma\rangle$ \textit{vs.} $\langle\liT\ \texttt{R1}, \texttt{0x1234} \mid \sigma'\rangle$ for $\sigma, \sigma'$, s.t. 
  $\sigma(\texttt{R1}) \neq \sigma'(\texttt{R1})$, \ie for different initial values of \texttt{R1}.
  \item[($T_4$)] $\langle\liT\ \texttt{R1}, \texttt{0x1234} \mid \sigma\rangle$ \textit{vs.} $\langle\divT\ \texttt{R1}, \texttt{R2}, \texttt{R3} \mid \sigma\rangle$ for some~$\sigma$.\looseness=-1
  \item[($T_5$)] $\langle\divT\ \texttt{R1}, \texttt{R2}, \texttt{R3} \mid \sigma\rangle$ \textit{vs.} $\langle\divT\ \texttt{R1}, \texttt{R2}, \texttt{R3} \mid \sigma'\rangle$ for $\sigma, \sigma'$, s.t.
    $\sigma(\texttt{R1}) \neq \sigma'(\texttt{R1})$, $\sigma(\texttt{R2}) \neq \sigma'(\texttt{R2})$, and $\sigma(\texttt{R3}) \neq \sigma'(\texttt{R3})$. Further, $\sigma(\texttt{R3}),\sigma'(\texttt{R3}) \not\in \{0, 1\}$, \ie neither of the initial states assigns $0$ or $1$ to \texttt{R3}. 
\end{compactitem}
For the attacker, all test cases except $T_4$ are indistinguishable, since the program executions take the same amount of time.
In terms of template distinguishability, for instance, test case $T_4$ is distinguished by any contract that includes at least one of the template's atoms.
   
\mypara{Phase 2a: Contract Synthesis}  
Based on the empirical leakage characterization,  \tool{}  synthesizes, using integer linear programming, a candidate contract $\ctr$  
that (a) distinguishes all attacker-distinguishable test cases, and (b) is as precise as possible given the template $\template$. In other words, it distinguishes as few of the attacker-indistinguishable test cases as possible among all contracts in $\template$.

In our example, based on the initial test cases, \tool{} might synthesize the following initial contract:
\begin{align*}
  \ctr_1 = \{&(\liT, \langle\texttt{"\liT"}, -\rangle)\}
\end{align*}
This contract $\ctr_1$ captures that the attacker $\atk$ can distinguish  \liT\ instructions  from  \divT\ instructions (as indicated by $T_4$ above).
Note that $\ctr_1$ is the most precise contract that captures all leaks exercised by the test cases.
For instance, the contract exposing also the value of the immediate in $\liT$ instructions, \ie $(\liT, \langle\texttt{"\imm"}, \myvalue{\imm}\rangle)$, would still capture the leak in $T_4$, but it also unnecessarily distinguishes $T_1$ and it is, therefore, not a possible solution.

\mypara{Phase 2b: Bounded verification of soundness}  
Since the test cases may miss some leaks, the synthesized contract is not guaranteed to be sound.
To ensure soundness, \tool{} alternates the contract synthesis step with a bounded verification step, which either provides a counterexample to soundness or it proves that the contract is sound up to a given bound on the number of executed cycles.
In case of unsoundness, the counterexample is used to refine the contract by looping back to the contract synthesis step (\emph{Phase 2a}).\looseness=-1

In our example, $\ctr_1$ is unsound since it does not capture the leaks generated by \divT\ instructions.
The bounded verification might generate the following test case as a counterexample to soundness: 
\begin{compactitem}
  \item[($T_6$)] $\langle\divT\ \texttt{R1}, \texttt{R2}, \texttt{R3} \mid \sigma\rangle$ \emph{vs.} $\langle\divT\ \texttt{R1}, \texttt{R2}, \texttt{R3} \mid \sigma'\rangle$ for $\sigma, \sigma'$, s.t. $\sigma(\texttt{R3}) = 0$ and $\sigma'(\texttt{R3}) = 5$.
\end{compactitem}
Next, \tool{} simulates the contract template $\template$ on the newly generated counterexample to determine which of the template's contracts would distinguish it.
Then, it re-executes the contract synthesis step (\emph{Phase 2a}).
This time \tool{} generates the following contract:
\begin{align*}
  \ctr_2 = \{&(\divT, \langle\texttt{"Reg[RS2]"}, \myvalue{\texttt{Reg[RS2]}}\rangle)\}
\end{align*}
$\ctr_2$  captures that the value of the second source register influences the execution time of the \divT\ instruction, thereby distinguishing~$T_6$.
The presence of an atom that applies only to the \divT\ instruction also ensures that \divT\ instructions are distinguishable from \liT\ instructions, which distinguishes $T_4$.

\mypara{Phase 3: Unbounded verification of soundness}
As soon as the bounded verification does not find any more counterexamples, \tool{} attempts to verify the synthesized contract unboundedly.
To this end, \tool{} generates inductive invariants using the Houdini algorithm, following the verification approach of Wang et al.~\cite{Wang23}.
If the verification succeeds, \tool{} returns the synthesized sound and precise contract to the user.
In our example, $\ctr_2$ can be proved sound in an unbounded manner and \tool{} terminates.

\subsection{Generating more precise contracts}\label{sec:overview:precision}

\tool{} synthesizes the most precise sound contract among all possible contracts given a specific contract template.
The choice of the template, however, can affect the synthesized contracts' precision.

The sound contract $\ctr_2$ from \Cref{sec:overview:synthesis} is imprecise.
For instance, it distinguishes the test case below, even though the CPU $\uarch$ executes both instructions in the same number of cycles:
\begin{compactitem}
  \item[($T_7$)] $\langle\divT\ \texttt{R1}, \texttt{R2}, \texttt{R3} \mid \sigma\rangle$ \textit{vs.} $\langle\divT\ \texttt{R1}, \texttt{R2}, \texttt{R3} \mid \sigma'\rangle$ for $\sigma, \sigma'$, s.t. $\sigma(\texttt{R3}) = 7$ and $\sigma'(\texttt{R3}) = 8$.
\end{compactitem}
However, \tool{} cannot generate a more precise contract since the template from \Cref{sec:overview:synthesis} can only expose the entire value of operands for $\divT$ instructions.

To derive more precise contracts, we need a more expressive contract template, which can be achieved by including additional atoms of the following form, where $k$ is a constant value:
$$(\divT{}, \langle\texttt{"Reg[RS2]=k?"}, \myvalue{\texttt{Reg[RS2]}} == k\rangle)$$ 

Running \tool{} again with the updated template and test cases $T_1, \ldots, T_7$ may result in generating the following contract at the end of \emph{Phase 2a}:
\begin{align*}
  \ctr_2 = \{&(\divT, \langle\texttt{"Reg[RS2]=0?"}, \myvalue{\texttt{Reg[RS2]}} == 0\rangle)\}
\end{align*}
Even though this contract captures one of the leaks caused by $\divT$, it is not sound since it misses the other special case of the \divT\ instruction, \ie when \texttt{R2} is $1$.
When running the bounded verification (\emph{Phase 2b}), \tool{} might produce the following counterexample: 
\begin{compactitem}
  \item[($T_8$)] $\langle \divT\ \texttt{R1}, \texttt{R2}, \texttt{R3} \mid \sigma\rangle$ \emph{vs.} $\langle\divT\ \texttt{R1}, \texttt{R2}, \texttt{R3} \mid \sigma'\rangle$\\ for $\sigma, \sigma'$, s.t.
  $\sigma(\texttt{R2}) = 1$ and $\sigma'(\texttt{R2}) = 5$.
\end{compactitem}
A new synthesis step, then, finally results in the following contract:
\begin{align*}
  \ctr_3 = \{&(\divT, \langle\texttt{"Reg[RS2]=0?"}, \myvalue{\texttt{Reg[RS2]}} == 0\rangle), \\
  &(\divT, \langle\texttt{"Reg[RS2]=1?"}, \myvalue{\texttt{Reg[RS2]}} == 1\rangle)\}
\end{align*}
This contract is sound and, again, it is the most precise given the updated template.
Therefore, \tool{} terminates by returning $\ctr_3$ to the user after the final unbounded verification.

\section{Formal model}\label{sec:formal-model}

Here, we present the core components of our formal model.

\subsection{Architectures and microarchitectures}

Leakage contracts act as a security abstraction between the instruction set architecture (short: architecture or ISA) and a microarchitecture, that is, a concrete implementation in a processor.
Next, we formalize both concepts and what it means for a microarchitecture to correctly implement an architecture~\cite{Wang23}.

\mypara{Architectures}
We view an architecture as a state machine that defines how the execution progresses through a sequence of architectural states, where each transition corresponds to the execution of a single instruction.
Formally, an \emph{architecture} is a pair $(\archStates, \archStep)$ where $\archStates$ is a set of architectural states (each modeling the values of registers and data/instruction memory) and $\archStep: \archStates \to \archStates$ is a transition function that maps each state $\sigma \in \archStates$ to its successor $\archStep(\sigma)$, obtained by executing the next instruction to be executed in $\sigma$.
To capture an execution, we denote by $\archEval(\sigma)$ the sequence of states reached from $\sigma$ by successive applications of $\archStep$, \ie $\archEval(\sigma) = \sigma_0, \sigma_1, \sigma_2, ...$ where $\sigma_0 = \sigma$ and $\sigma_{i+1} = \archStep(\sigma_i)$ for all $i \geq 0$.

\mypara{Microarchitectures}
We view microarchitectures as state machines modeling how the processor's state evolves  at cycle level.
Thus, a \emph{microarchitecture} is a triple $(\implStates, \initImplStates, \implStep)$ where
 $\implStates = \archStates \times \uarchStates$ is the set of microarchitectural states (some of which are initial states $\initImplStates \subseteq \implStates$) 
 and $\implStep: \implStates \to \implStates$ is a transition function that maps each   state $\sigma \in \implStates$ to its successor $\implStep(\sigma)$, obtained by executing the processor for one cycle.
Each microarchitectural state $\sigma \in \implStates$ consists of an architectural part $\archProj{\sigma} \in \archStates$ and a microarchitectural part $\uarchProj{\sigma} \in \uarchStates$ modeling the state of microarchitectural components (such as caches and predictors).
For simplicity, we require that there is a canonical initial microarchitectural component $\mu_0 \in \uarchStates$ such that  $\initImplStates = \{ (\sigma, \mu_0) \mid \sigma \in \archStates\}$.
Similarly to  $\archEval(\sigma)$,  $\implEval(\sigma)$ denotes the sequence of states reached from $\sigma$ by successive applications of $\implStep$.
Finally, given a predicate $\phi$ over $\implStates$, $\implEvalFilter{\phi}(\sigma)$ denotes the sequence of elements from $\implEval(\sigma)$ that satisfy $\phi$.
For simplicity, we refer to a microarchitecture only using its transition function $\implStep$ rather than the  tuple $(\implStates, \initImplStates, \implStep)$.
Similarly, we refer to an architecture $(\archStates, \archStep)$ simply using its transition function $\archStep$.

\mypara{ISA compliance}
To correctly implement an architecture $\arch$, an implementation $\uarch$ needs to change the architectural state in a manner consistent with $\arch$.
We capture this following the ISA compliance notion from Wang et al.~\cite{Wang23}, which relies on a \emph{retirement predicate} $\phi$ that indicates when $\uarch$ retires instructions.
Then, we say that a microarchitecture $\uarch$ implements an architecture $\arch$ if one can map changes of the architectural state in $\uarch$ to $\arch$'s executions using $\phi$. %
 
Formally, a microarchitecture $\uarch$ \emph{correctly implements} an architecture $\arch$~\cite{Wang23} given a retirement predicate $\phi$  over $\archStates$, written $\isasat{\uarch}{\arch}{\phi}$, if for all states $\sigma \in \initImplStates$:
\begin{compactenum}[(a)]
\item $\archProj{(\implEvalFilter{\phi}(\sigma))} = \archEval( \archProj{\sigma} )$, \ie all architectural changes witnessed by $\phi$ agree with $\arch$, and 
\item $\archProj{(\implStep^i(\sigma))} = \archProj{(\implStep^{i-1}(\sigma))}$ whenever $\archProj{(\implStep^i(\sigma))} \\ \not\models \phi$, \ie  no architectural changes beyond those witnessed by $\phi$.\looseness=-1 
\end{compactenum}

\subsection{Leakage contracts}\label{sec:contracts}

Next, we introduce how we model leakage contracts in \tool{}.
We first introduce \emph{contract atoms}, which are the basic building block for contracts. %
Next, we introduce  \emph{contract templates}, which capture the synthesis space for our approach as a set of possible atoms.

\mypara{Contract atoms}
Atoms capture potential leaks at the instruction level, and they  are built from \emph{applicability predicates} and \emph{leakage functions}.
The former determine whether a contract atom is applicable in a given architectural state, whereas the latter determine what information from the architectural state is leaked in case an atom is applicable.
Formally, a \emph{contract atom} $A$ is a pair $A = (\pi_A, \phi_A)$ where the applicability predicate $\pi_A$ is a predicate over $\archStates$ and the leakage function $\phi_A$ is a function $\phi_A : \archStates \to \ctrObs$ mapping an architectural state to a contract observation in $\ctrObs$.\looseness=-1

\begin{example}\label{example:template}
Addition and subtraction instructions $\mathtt{addi}$ and $\mathtt{subi}$ could both leak their immediate value.
Contracts atoms for these instructions can be defined as follows:
\begin{asparaitem}
    \item $(\texttt{addi}, \langle\texttt{"imm"}, \myvalue{\texttt{imm}}\rangle)$, where predicate $\texttt{addi}$ holds whenever the instruction is an addition instruction and $\langle\texttt{"imm"}, \myvalue{\texttt{imm}}\rangle$ is a leakage function that leaks the immediate value of the instruction $\myvalue{\texttt{imm}}$ together with a leakage identifier $\texttt{"imm"}$.
    \item $(\texttt{subi}, \langle\texttt{"imm"}, \myvalue{\texttt{imm}}\rangle)$, where predicate $\texttt{subi}$ holds whenever the instruction is a subtraction instruction and the leakage function is the same as above.
\end{asparaitem}
Note that  $\langle\texttt{addi}\ \texttt{R1}, \texttt{R2}, \texttt{0x1234} \mid \sigma\rangle$ \textit{vs.} $\langle\texttt{subi}\ \texttt{R1}, \texttt{R2}, \texttt{0x1234} \mid \sigma\rangle$ would be indistinguishable by a contract including both atoms.
Even though two different instructions are executed, both atoms would be applicable and expose $\langle\texttt{"imm"}, \texttt{0x1234} \rangle$.
\end{example}

\mypara{Contract templates}
A \emph{contract template}~$\template$ is a set of contract atoms.
We require that the applicability predicates of contract atoms with the same leakage function are mutually exclusive, \ie at most one of these atoms is applicable in any given state.
This holds in \Cref{example:template}, since instructions can either be  additions or subtractions.

We also require that the images of different leakage functions are disjoint.
This ensures that leakage from one function cannot ``cancel out'' leakage from another function. 
For instance, a leakage function exposing an immediate operand and another  function exposing the instruction's opcode must not leak the same value.
A simple way to satisfy this requirement is associating a unique identifier with each leakage function, like \texttt{"imm"} in \Cref{example:template}.

\mypara{Leakage contracts}
Given a template $\template$, any subset $S \subseteq \template$ of the template induces a \emph{contract} $\contract_S : \archStates \to \powerset{\ctrObs}$, which is a function from architectural states to sets of observations.\footnote{We often refer to a contract $\contract_S$ directly using the set of atoms inducing it.}
Formally, $\contract_S$ is defined by evaluating each applicable atom in $S$ as follows:
$\contract_S(\sigma) := \{ \phi_A(\sigma) \mid  (\pi_A, \phi_A) \in S \wedge \pi_A(\sigma) \}$.
Given a sequence $\tau:=\sigma_0, \sigma_1, \ldots$ of architectural states, $\contract_S(\tau)$ denotes the corresponding sequence of observations $\contract_S(\sigma_0), \contract_S(\sigma_1), \ldots$.
Therefore, given an architectural state~$\sigma \in \archStates$, $\contract_S(\archEval(\sigma))$ denotes the \emph{contract trace}, \ie the sequence of contract observations associated with the execution $\archEval(\sigma)$.

Finally, in \tool{}, a \emph{test case} $T = (\sigma, \sigma')$ is a pair of architectural states. %
We say that $T$ is \emph{contract distinguishable} for given contract $\contract_S$ if the corresponding contract traces are different, \ie $\contract_S(\archEval(\sigma)) \neq \contract_S(\archEval(\sigma'))$.

\subsection{Contract satisfaction}

We conclude by formalizing contract satisfaction~\cite{contracts2021} in our setting.
For this, we  model microarchitectural attackers and then characterize when a contract captures all leaks observable by the attacker.

\mypara{Attackers}
We consider (passive) microarchitectural attackers that can  extract information from the microarchitectural state.
Formally, we model a microarchitectural attacker as a function $\atk: \implStates \to \atkObs$ that maps microarchitectural states  to attacker observations in $\atkObs$.
Common attacker models, like the one exposing the timing of instruction retirement~\cite{Tsunoo03} or the one exposing the final state of caches~\cite{Yarom14,Doychev2015}, can be instantiated in this setting.
Given an initial microarchitectural state~$\sigma \in \initImplStates$, $\atk(\implEval(\sigma))$ denotes the \emph{attacker trace}, \ie the sequence of attacker observations associated with the cycle-accurante microarchitectural execution $\implEval(\sigma)$.
We say that a test case $T = (\sigma, \sigma')$ is \emph{attacker distinguishable} if the corresponding attacker traces are different, \ie $\atk(\implEval(\sigma)) \neq \atk(\implEval(\sigma'))$.

\mypara{Contract Satisfaction}
A microarchitecture $\uarch$ {satisfies} the contract $\contract_S$ for an attacker $\atk$ if $\atk$ cannot learn more information about the initial architectural state by monitoring $\uarch$'s executions than what is exposed by the contract.
That is, for any two initial states,
whenever the contract traces are the same, then the attacker traces must be identical, \ie{} $\atk$ cannot distinguish the two architectural executions.
This is formalized in \Cref{def:contract-satisfaction}.

\begin{definition}\label{def:contract-satisfaction}
Microarchitecture $\uarch$ \emph{satisfies} contract $\contract_S$ for attacker $\atk$, written $\ctrsat{ \contract_S }{ \atk }$, if for all initial states $\sigma,\sigma' \in \initImplStates$, 
    if  $\contract_S(\archEval(\archProj{\sigma})) = \contract_S(\archEval(\archProj{\sigma'}))$, 
    then $\atk(\implEval(\sigma)) = \atk(\implEval(\sigma'))$.
\end{definition}

\Cref{def:contract-satisfaction} 
 refers to 4 different traces: two contract traces defined over the architecture $\arch$ and two attacker traces defined over the microarchitecture $\uarch$.
Wang et al.~\cite{Wang23} proposed the notion of {microarchitectural contract satisfaction}, formalized below, which is expressed only over a pair of microarchitectural traces and allows to decouple reasoning about leakage and about ISA compliance.

\begin{definition}\label{def:uarch-contract-satisfaction}
    Microarchitecture $\uarch$ \emph{microarchitecturally-satisfies} contract $\contract_S$ for attacker $\atk$ and predicate $\phi$, written $\uarchctrsat{ \contract_S }{ \atk }{\phi}$, if for all initial states $\sigma,\sigma'$, 
    if  $\contract_S(\implEvalFilter{\phi}(\sigma)) = \contract_S(\implEvalFilter{\phi}(\sigma'))$, 
    then $\atk(\implEval(\sigma)) = \atk(\implEval(\sigma'))$.
\end{definition}

As stated in \cite[Theorem 1]{Wang23}, whenever  $\uarch$ correctly implements  $\arch$, then contract satisfaction and microarchitectural contract satisfaction are equivalent.
Hence, \Cref{def:uarch-contract-satisfaction} is the notion targeted by \tool{} and, for conciseness, we will refer to it simply as ``contract satisfaction'' throughout the rest of the paper.

\section{Contract synthesis}\label{sec:synthesis}

In this section, we describe \tool{}'s synthesis approach. %
First, we describe the empirical leakage characterization (\emph{Phase 1} in Fig.~\ref{fig:approach-overview}; \Cref{sec:synthesis:testing}).
Next, we describe how \tool{} synthesizes a candidate contract using integer linear programming (\emph{Phase 2a} in Fig.~\ref{fig:approach-overview}; \Cref{sec:synthesis:synthesis}).
Then, we describe how \tool{} checks whether a contract is sound, \ie it capture all leaks, with bounded verification (\emph{Phase 2b} in Fig.~\ref{fig:approach-overview}; \Cref{sec:synthesis:bounded-verification}) and unbounded verification (\emph{Phase 3} in Fig.~\ref{fig:approach-overview}; \Cref{sec:synthesis:unbounded-verification}).
The proofs of all propositions are given in \techReportAppendix{appendix:proofs}.

\subsection{Phase 1: Empirical leakage characterization}\label{sec:synthesis:testing}

\tool{} generates a set of test cases $T$ to characterize a processor's leakage.
As an input to the ILP-based contract synthesis (\emph{Phase 2a}), \tool{} needs to determine for each test case:
\begin{inparaenum}[(a)]
    \item whether it is \emph{attacker distinguishable}, and
    \item which contracts from the contract template would make it \emph{contract distinguishable}.
\end{inparaenum}

\mypara{Test-case generation}
Test cases aid contract synthesis in two opposing ways:
\begin{inparaenum}[(1)]
\item Attacker-distinguishable test cases uncover leaks and force the {inclusion} in the contract of atoms that expose these leaks.
\item Attacker-indistinguishable test cases show which atoms should {not} be included in the contract to avoid imprecision.
\end{inparaenum}

Based on these observations, our goal is to generate test cases that differ in exactly one atom.
This has two benefits.
First, if a test case is attacker distinguishable, there is only one possible responsible atom.
Second,  as test cases ``differ'' only in  one atom, this strategy is likely to generate many attacker indistinguishable tests, in particular when targeting atoms that do not capture leakage in the CPU.\looseness=-1

To generate such test cases, we proceed as follows.
We start by generating an architectural state $\sigma_0$ that consists of a random instruction sequence and a random initial valuation of the registers.
Then, we randomly pick an instruction from the sequence and an applicable leakage function and generate a pair of architectural states $(\sigma,\sigma')$ such that the leakage function will evaluate differently in $\sigma$ and $\sigma'$ on the chosen instruction.

To implement this, we define a \emph{modifier function} for each leakage function in our template, similarly to~\cite{Mohr24}.
Modifier functions may, for instance, change the initial value of a register, inject new  instruction to adapt the architectural state, or modify an instruction.\looseness=-1

Recall the template introduced in \Cref{sec:overview:synthesis}. 
Given this template, the modifier function for the leakage function $\imm$ would for example transform a state $\sigma_0$ whose first instruction is \texttt{li}\ \texttt{R1}, \texttt{0x1234} to the pair $(\sigma, \sigma')$ where $\sigma = \sigma_0$ and the only difference in $\sigma'$ is that its first instruction is \texttt{li}\ \texttt{R1}, \texttt{0x5678}.
Similarly, the modifier function for the leakage function \texttt{RD} could substitute the instruction \texttt{div}\ \texttt{R1}, \texttt{R2}, \texttt{R3} in $\sigma_0$ with \texttt{div}\ \texttt{R4}, \texttt{R2}, \texttt{R3} in $\sigma'$.
For other leakage functions such as \texttt{Reg[RS1]}, the modifier function does not alter any instructions but modifies the value of a specific register in the initial architectural state of $\sigma'$.
Depending on the context of an instruction, modifier functions are not guaranteed to generate a test case that only differs in the chosen leakage function, \eg if dependent instructions are present in the instruction sequence. 

\mypara{Attacker distinguishability}
Each of the test cases $t = (\sigma, \sigma')$ in $T$ is executed on the target CPU $\uarch$.
To derive the attacker traces for $\sigma$ and $\sigma'$, we construct the two initial states $(\sigma, \mu_0)$ and $(\sigma',\mu_0)$ for the CPU $\uarch$.
To determine attacker indistinguishability, we then simulate the executions $\implEval((\sigma, \mu_0))$ and $\implEval((\sigma',\mu_0))$ and check whether the attacker can distinguish them, \ie we check if $\atk(\implEval((\sigma, \mu_0))) \neq \atk(\implEval((\sigma',\mu_0)))$.
In general, the executions $\implEval((\sigma, \mu_0))$ and $\implEval((\sigma',\mu_0))$ might be infinite, so \tool{} simulates them up to a fixed number of cycles $n$, and we pick an $n$ that is large enough to fully simulate all our test cases. %

\mypara{Template distinguishability}\label{sec:synthesis:template-distinguishability}
To succinctly characterize the contract distinguishability of each test case for all possible contracts induced by the template $\template$, we rely on the following observation.

Let $(\sigma, \sigma') \in T$ be a test case and let $\contract_S$ be a contract.
Also, let $\archEval (\sigma) = \sigma_0, \sigma_1, \ldots $ and $\archEval (\sigma') = \sigma_0', \sigma_1', \ldots $ be the architectural executions from $\sigma$ and $\sigma'$.
Test case $(\sigma, \sigma')$ can be contract distinguishable under $\contract_S$ for two distinct reasons:
\begin{asparaitem}
    \item \textit{Leakage mismatch:} There are atoms $A,B \in S$ with $\phi_A = \phi_B$ and an index $i$ s.t. $\pi_A(\sigma_i) \wedge \pi_B(\sigma'_i)$ and $\phi_A(\sigma_i) \neq \phi_B(\sigma_i')$, \ie at some point, the atoms $A$ and $B$ are applicable in the two executions but evaluate differently.
        Note that $A$ and $B$ can be the same atom.
    \item \textit{Applicability mismatch:} There is an atom $A \in S$ and an index~$i$ s.t. $\pi_A(\sigma_i)$ (or $\pi_A(\sigma'_i)$) and no atom $B \in S$ with $\phi_A = \phi_B$ is applicable in $\sigma'_i$ (or $\sigma_i$), \ie there is an observation only in one  executions.
\end{asparaitem}

\begin{example}\label{example:approach:empirical}
    Consider the following test cases:
    \begin{asparaitem}
        \item Test case  \texttt{li}\ \texttt{R1}, \texttt{0x1234} \textit{vs.} \texttt{li}\ \texttt{R1}, \texttt{0x5678} would be distinguishable due to a leakage mismatch by a contract including the atom $(\texttt{li}, \langle\texttt{"imm"}, \myvalue{\texttt{imm}}\rangle)$, since the immediate value exposed by the atom is different in the two executions.
        \item Test case \texttt{li}\ \texttt{R1}, \texttt{0x1234} \textit{vs.} \texttt{add}\ \texttt{R1}, \texttt{R2}, \texttt{R3} would be distinguishable due to an applicability mismatch by a contract consisting only of the atom $(\texttt{li}, \langle\texttt{"imm"}, \myvalue{\texttt{imm}}\rangle)$ as no atom with the same leakage function is applicable in the second execution.
    \end{asparaitem}
\end{example}

We characterize the template distinguishability of each test case $t = (\sigma, \sigma')$ via its \emph{strongly-distinguishing atoms} $\mathit{SD}_t \subseteq \template$ and its \emph{xor-distinguishing pairs} $\mathit{XOR}_t \subseteq \binom{\template}{2}$:

A \emph{strongly-distinguishing atom} $A \in \template$ is one for which there exists an index $i$ such that $\pi_A(\sigma_i)$ (or $\pi_A(\sigma'_i)$) holds, and no atom $B \in \template$ satisfies $\pi_B(\sigma'_i) \wedge \phi_A(\sigma_i) = \phi_B(\sigma'_i)$ (or $\pi_B(\sigma_i) \wedge \phi_B(\sigma_i) = \phi_A(\sigma'_i)$).
Including any strongly-distinguishing atom $A \in \mathit{SD}_t$ in a contract will make the test case $t$ contract distinguishable independently of which other atoms are included.
This contract distinguishability may be caused by a leakage mismatch \emph{or} an applicability mismatch.

For example, in the test case \liT\ \texttt{R1}, \texttt{0x1234} \textit{vs.} \liT\ \texttt{R1}, \texttt{0x5678} the atom $(\texttt{li}, \langle\texttt{"imm"}, \myvalue{\texttt{imm}}\rangle)$ is strongly distinguishing, as it is applicable in both executions and the immediate values are different.

A \emph{xor-distinguishing atom pair} $\{A, B\} \in \binom{\template}{2}$ with $A \neq B$ is one for which there exists an index $i$ such that $\pi_A(\sigma_i) \wedge \pi_B(\sigma'_i)$ holds and $\phi_A(\sigma_i) = \phi_B(\sigma'_i)$.
Due to our assumption that the applicability predicates of contract atoms that share a leakage function are mutually exclusive, there can be no other atom $C \in \template$ with $\phi_A = \phi_B = \phi_C$ that is applicable in $\sigma_i$ or $\sigma'_i$.
Thus, including exactly one of the two atoms $A$ and $B$ in a xor-distinguishing pair $\{A, B\} \in \mathit{XOR}_t$ will make the test case $t$ contract distinguishable due to an applicability mismatch.
For example, in the test case \texttt{addi}\ \texttt{R1}, \texttt{R2}, \texttt{0x1234} \textit{vs.} \texttt{subi}\ \texttt{R1}, \texttt{R2}, \texttt{0x1234} the pair of atoms $(\texttt{addi}, \langle\texttt{"imm"}, \myvalue{\texttt{imm}}\rangle)$ and $(\texttt{subi}, \langle\texttt{"imm"}, \myvalue{\texttt{imm}}\rangle)$ is a xor-distinguishing pair.

The two sets $\mathit{SD}_t$ and $\mathit{XOR}_t$ fully characterize the contract distinguishability of a test case $t$ under any possible contract: %

\begin{restatable}{proposition}{templatedistinguishability}\label{thm:template-distinguishability}
    Let $t$ be a test case and let $\contract_S$ be a contract.
    Then, $t$ is contract distinguishable under $\contract_S$ iff
    \begin{compactitem}[\hspace{1em}$\bullet$]
        \item $S$ includes a strongly-distinguishing atom, \ie $\mathit{SD}_t \cap S \neq \emptyset$, or
        \item $S$ includes exactly one of the two atoms in a xor-distinguishing pair, \ie $\exists \{A, B\} \in \mathit{XOR}_t$ with $|S \cap \{A, B\}| = 1$.
    \end{compactitem}
\end{restatable}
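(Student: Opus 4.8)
The plan is to reduce contract distinguishability to a statement about a single index and a single observation, and then to exploit the two structural assumptions on the template---that distinct leakage functions have disjoint images, and that atoms sharing a leakage function have mutually exclusive applicability predicates---to pin down exactly which atom of $S$ (or which pair) is responsible. Writing $\archEval(\sigma) = \sigma_0, \sigma_1, \ldots$ and $\archEval(\sigma') = \sigma_0', \sigma_1', \ldots$, and recalling that the contract traces are sequences of \emph{sets} of observations, $t$ is contract distinguishable under $\contract_S$ iff there is an index $i$ and an observation $o \in \ctrObs$ lying in exactly one of $\contract_S(\sigma_i)$ and $\contract_S(\sigma_i')$. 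Since $\mathit{SD}_t$ and $\mathit{XOR}_t$ are symmetric in the two executions (the xor-pairs being unordered), I may assume throughout that $o \in \contract_S(\sigma_i) \setminus \contract_S(\sigma_i')$; the reverse case follows by swapping $\sigma$ and $\sigma'$.

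The central step, used in both directions, is a normalization. If $o \in \contract_S(\sigma_i)$ then some $A \in S$ has $\pi_A(\sigma_i)$ and $\phi_A(\sigma_i) = o$. Because images of distinct leakage functions are disjoint, any atom $C$ with $\phi_C(\sigma_i') = o$ satisfies $\phi_C = \phi_A$; combined with mutual exclusivity, this shows that at most one atom of the whole template $\template$ with leakage function $\phi_A$ is applicable in $\sigma_i'$, call it $B^\ast$ when it exists. Hence $o \notin \contract_S(\sigma_i')$ splits into exactly two situations: (i) no template atom with leakage function $\phi_A$ applicable in $\sigma_i'$ evaluates to $o$ there---equivalently, $A$ is strongly distinguishing at index $i$; or (ii) the unique applicable $B^\ast$ does evaluate to $o$, in which case $B^\ast \notin S$, since otherwise $o \in \contract_S(\sigma_i')$.

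For the forward direction I case-split on (i) versus (ii). In case (i), $A$ is strongly distinguishing and $A \in S$, giving $\mathit{SD}_t \cap S \neq \emptyset$. In case (ii), $A \in S$ and $B^\ast \notin S$ force $A \neq B^\ast$, and $\pi_A(\sigma_i) \wedge \pi_{B^\ast}(\sigma_i')$ together with $\phi_A(\sigma_i) = \phi_{B^\ast}(\sigma_i')$ exhibit $\{A, B^\ast\}$ as a xor-distinguishing pair with $|S \cap \{A, B^\ast\}| = 1$. For the backward direction the witnesses are immediate: if $A \in \mathit{SD}_t \cap S$, its defining index $i$ yields $\phi_A(\sigma_i) \in \contract_S(\sigma_i)$, and the strongly-distinguishing condition (quantified over all of $\template \supseteq S$) forbids any atom of $S$ from reproducing that value in $\sigma_i'$; and if $\{A, B\} \in \mathit{XOR}_t$ with, say, $A \in S$ and $B \notin S$, then at the witnessing index $\phi_A(\sigma_i) = \phi_B(\sigma_i')$ lies in $\contract_S(\sigma_i)$, while mutual exclusivity makes $B$ the only template atom of its leakage function applicable in $\sigma_i'$, so $B \notin S$ keeps that value out of $\contract_S(\sigma_i')$.

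The main obstacle is bookkeeping around the two structural assumptions rather than any deep argument: I must be careful that ``$o$ does not reappear on the other side'' is correctly translated into a statement about the \emph{unique} candidate atom $B^\ast$ of the matching leakage function, and that the membership condition $|S \cap \{A, B\}| = 1$ corresponds exactly to an applicability mismatch obtained by including one but not the other atom of a pair whose values agree. The symmetric ``(or $\ldots$)'' clauses in the definitions of $\mathit{SD}_t$ and $\mathit{XOR}_t$ are absorbed uniformly by the initial WLOG on the direction of $o$.
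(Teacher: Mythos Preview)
Your argument is correct and follows essentially the same approach as the paper's proof: both directions hinge on the two structural assumptions (disjoint images, mutually exclusive applicability) to isolate the unique candidate atom $B^\ast$ on the opposite execution, and then case-split on whether $B^\ast$ reproduces the observation. Your organization is in fact a bit tidier than the paper's, which first classifies the mismatch as ``leakage'' versus ``applicability'' and then nests further sub-cases (four in total), whereas you collapse these into the single dichotomy (i)/(ii) by working directly with the missing observation~$o$.
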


To compute the sets $\mathit{SD}_t$ and $\mathit{XOR}_t$ for a test case $t$, we simulate the execution of the test case w.r.t. the contract template $\template$ and check the conditions for strongly-distinguishing and xor-distinguishing atoms for each index $i$.
To simulate $\archEval (\sigma)$ (which might be infinite), \tool{} only simulates executions up to a fixed number of steps.\looseness=-1 %

\revision{
\mypara{Relation to Mohr et al.~\cite{Mohr24}}
Next, we discuss how our characterization of template distinguishability in terms of strongly-distinguishing atoms and xor-distinguishing pairs relates to the one introduced by Mohr et al.~\cite{Mohr24}, which relies on the notion of \emph{distinguishing atoms}.
An atom $A \in \template$ is \emph{distinguishing}~\cite{Mohr24} if the contract consisting solely of atom $A$ distinguishes the test case, \ie if $\contract_{\{A\}}(\archEval(\sigma)) \neq \contract_{\{A\}}(\archEval(\sigma'))$.
Equivalently, $A$ is distinguishing if there exists an index~$i$ such that $\pi_A(\sigma_i) \oplus \pi_A(\sigma'_i)$ or 
$\pi_A(\sigma_i) \wedge \pi_A(\sigma'_i) \wedge \phi_A(\sigma_i) \neq \phi_A(\sigma'_i)$.

For templates that contain multiple atoms that share the same leakage function (like the templates implemented in \tool{} and used in \Cref{sec:evaluation}), the set of distinguishing atoms is \emph{not} sufficient to fully characterize a test case's contract distinguishability (as opposed to our characterization, as indicated by \Cref{thm:template-distinguishability}).
In particular, the inclusion of two distinguishing atoms may result in contract indistinguishability as \Cref{example:template} demonstrates.

In contrast, for templates where no two distinct atoms share the same leakage function, the set of distinguishing atoms is sufficient to characterize contract distinguishability and, therefore, our characterization is equivalent to the one from \cite{Mohr24}. 
\begin{restatable}{proposition}{templatedistinguishabilitydate}\label{thm:template-distinguishability-date}
    Let $t$ be a test case and let $\contract_S$ be a contract. 
    If no two distinct atoms share the same leakage function, then $t$ is contract distinguishable under $\contract_S$ iff $S$ includes a distinguishing atom for $t$.
\end{restatable}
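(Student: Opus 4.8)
The plan is to reduce the claim to \Cref{thm:template-distinguishability}, which already characterizes contract distinguishability through the set $\mathit{SD}_t$ of strongly-distinguishing atoms and the set $\mathit{XOR}_t$ of xor-distinguishing pairs. Under the extra hypothesis that no two distinct atoms share a leakage function, I expect both objects to collapse: $\mathit{XOR}_t$ becomes empty, and $\mathit{SD}_t$ coincides exactly with the set of distinguishing atoms recalled just above the statement. Once these two facts hold, \Cref{thm:template-distinguishability} immediately gives that $t$ is contract distinguishable under $\contract_S$ iff $\mathit{SD}_t \cap S \neq \emptyset$, \ie iff $S$ contains a distinguishing atom.

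First I would show $\mathit{XOR}_t = \emptyset$. By definition, a pair $\{A,B\}$ with $A \neq B$ lies in $\mathit{XOR}_t$ only if $\phi_A(\sigma_i) = \phi_B(\sigma'_i)$ for some index $i$. The template requirement that the images of distinct leakage functions be disjoint forces $\phi_A = \phi_B$, which under the hypothesis contradicts $A \neq B$. Hence no xor-distinguishing pair exists and the second disjunct of \Cref{thm:template-distinguishability} is vacuous.

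Next I would prove that $\mathit{SD}_t$ equals the set of distinguishing atoms. The key observation is that, in the defining condition of a strongly-distinguishing atom $A$, the quantified atom $B \in \template$ with $\pi_B(\sigma'_i) \wedge \phi_A(\sigma_i) = \phi_B(\sigma'_i)$ (or the symmetric variant) can only be $A$ itself: as above, disjoint images force $\phi_A = \phi_B$, and the hypothesis forces $B = A$. Substituting $B = A$ turns the strongly-distinguishing condition into the existence of an index $i$ with $\pi_A(\sigma_i) \wedge \neg(\pi_A(\sigma'_i) \wedge \phi_A(\sigma_i) = \phi_A(\sigma'_i))$, or the symmetric disjunct obtained by swapping $\sigma$ and $\sigma'$. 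A routine Boolean simplification of this expression reproduces precisely $\pi_A(\sigma_i) \oplus \pi_A(\sigma'_i)$ together with $\pi_A(\sigma_i) \wedge \pi_A(\sigma'_i) \wedge \phi_A(\sigma_i) \neq \phi_A(\sigma'_i)$, which is exactly the characterization of a distinguishing atom. Combining both directions yields $A \in \mathit{SD}_t$ iff $A$ is distinguishing.

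The main obstacle is the bookkeeping in this last step: I must track the two symmetric cases (triggering on $\pi_A(\sigma_i)$ versus $\pi_A(\sigma'_i)$) and verify that the sub-case $\pi_A(\sigma_i) \wedge \neg\pi_A(\sigma'_i) \wedge \phi_A(\sigma_i) \neq \phi_A(\sigma'_i)$ is already absorbed by the applicability-mismatch disjunct $\pi_A(\sigma_i) \wedge \neg\pi_A(\sigma'_i)$, so that nothing is double-counted or lost. Once the Boolean equivalence is checked at a fixed $i$ and existentially quantified over $i$, the proposition follows by plugging $\mathit{XOR}_t = \emptyset$ and $\mathit{SD}_t = \{A \mid A \text{ distinguishing}\}$ into \Cref{thm:template-distinguishability}.
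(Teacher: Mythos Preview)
Your proposal is correct and follows essentially the same route as the paper: reduce to \Cref{thm:template-distinguishability}, use the disjoint-images requirement together with the hypothesis to show $\mathit{XOR}_t=\emptyset$, and collapse the strongly-distinguishing condition to the distinguishing one by noting that the only possible $B$ in the definition is $A$ itself. The paper packages the second step as two separate lemmas (one direction holds unconditionally), but the logical content is the same as your combined equivalence argument.
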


To summarize, the approach by Mohr et al.~\cite{Mohr24} cannot handle templates with atoms that share the same leakage function (in addition to synthesizing contracts that are not guaranteed to be sound).
However, the templates used by \tool{} \emph{do} contain distinct atoms that share the same leakage functions and for this reason we require the more sophisticated characterization of contract distinguishability in terms of strongly-distinguishing atoms and xor-distinguishing pairs.
}

\subsection{Phase 2a: ILP-based contract synthesis}\label{sec:synthesis:synthesis}

Given a template $\template$, the empirical leakage characterization from \Cref{sec:synthesis:testing} computes the following information for each test case $t \in T$:
\begin{compactitem}
    \item $\mathit{SD}_t \subseteq \template$ the set of strongly-distinguishing atoms,
    \item $\mathit{XOR}_t \subseteq \binom{\template}{2}$ the set of xor-distinguishing atom pairs, and
    \item $d_t \in \mathbb{B}$ whether the test case is attacker distinguishable.
\end{compactitem}
For convenience, we denote the set of attacker-distinguishable test cases as $T_d = \{t \in T \mid d_t = 1\}$ and the set of attacker-indistinguishable test cases as $T_{nd} = T \setminus T_d$. %

Based on this information, \tool{} uses integer linear programming (ILP) to synthesize a contract.
That is, we compute a set of atoms $S \subseteq \template$, such that $\contract_S$ distinguishes all attacker-dist\-inguishable test cases and as few attacker-indistinguishable test cases as possible. %
Next, we detail this ILP formulation.

\mypara{Variables}
For each atom $A \in \template$, we introduce a boolean variable~$s_A$ that is true \emph{iff} the atom $A$ is included in the synthesized contract.

We also introduce a boolean variable $\textit{fp}_t$ for each $t \in T_{nd}$ to capture whether the attacker-indistinguishable test case $t$ is a false positive, \ie it is contract distinguishable in the synthesized contract.\looseness=-1

Finally, we introduce a boolean variable $x_{\{A,B\}}$ for each pair $\{A,B\}$ that is xor-distinguishing for some test case $t$. 

\mypara{Objective function}
The objective function of the ILP then is to minimize the number of false positives: $\min \sum_{t \in T_{nd}}{\textit{fp}_t}$.
As a secondary objective we minimize the number of atoms in the contract, \ie we minimize $\sum_{A \in \template}{s_A}$, which promises to yield a smaller contract that is likely more precise on unseen test cases.

\mypara{Constraints}
It remains to ensure that all attacker-distinguishable test cases are contract distinguishable and that the $\textit{fp}_t$ variables are consistent with the contract encoded by the $s_A$ variables.
From \Cref{thm:template-distinguishability}, we know that a test case $t$ is contract distinguishable \emph{iff} the contract contains (a) a strongly-distinguishing atom or (b) exactly one of the two atoms of one of its xor-distinguishing pairs.

We start by adding the following constraints enforcing $x_{\{A,B\}} = s_A \oplus s_B$ for any xor-distinguishing pair $\{A,B\} \in \bigcup_{t \in T_d} \mathit{XOR}_t$:
\begin{equation*}
\begin{aligned}\label{eq:xorconstraints}
    x_{\{A,B\}} & \leq s_A + s_B, \hspace{5mm} x_{\{A,B\}}  \geq s_A - s_B,\\
    x_{\{A,B\}} & \geq s_B - s_A, \hspace{5mm} x_{\{A,B\}} \leq 2 - s_A - s_B.
\end{aligned}
\end{equation*}

Following \Cref{thm:template-distinguishability}, we add the following constraint for every $t \in T_d$ to ensure that every attacker-distinguishable test case is contract distinguishable:
\begin{align}\label{eq:coverage}
    \sum_{A \in \mathit{SD}_t}{s_A} + \sum_{\{A,B\} \in \mathit{XOR}_t}{x_{\{A,B\}}} \geq 1.
\end{align}

Finally, to ensure that the $\textit{fp}_t$ variables are consistent with the contract distinguishability of the test cases we add the following constraints for each attacker-indistinguishable test case $t \in T_{nd}$: 
\begin{align*}
    \textit{fp}_t & \geq s_A \hspace{9.58mm}\text{ for every } A \in \mathit{SD}_t,\\
    \textit{fp}_t & \geq x_{\{A,B\}} \hspace{4.68mm}\text{ for every } \{A,B\} \in \mathit{XOR}_t.
\end{align*}

    \begin{restatable}{proposition}{ilpsynthesis}
    \label{thm:ilp-synthesis}
    Any solution to the ILP corresponds to  a set of atoms $S \subseteq \template$ such that $\contract_S$ distinguishes all attacker-distinguishable test cases and as few attacker-indistinguishable test cases as possible.
    \end{restatable}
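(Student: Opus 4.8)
The plan is to establish a value-preserving correspondence between the solutions of the ILP and the contracts $S \subseteq \template$, using \Cref{thm:template-distinguishability} as the bridge that translates the combinatorial distinguishability condition into the linear constraints. Reading off $S := \{A \in \template \mid s_A = 1\}$ from a solution, the argument splits into soundness (property (a), guaranteed by feasibility) and precision (property (b), guaranteed by optimality). As a preliminary step, I would verify that the four inequalities on $x_{\{A,B\}}$ enforce $x_{\{A,B\}} = s_A \oplus s_B$ over $\{0,1\}$ values: $x \geq s_A - s_B$ and $x \geq s_B - s_A$ push $x$ up to $1$ when exactly one of $s_A, s_B$ is $1$, while $x \leq s_A + s_B$ and $x \leq 2 - s_A - s_B$ force $x$ down to $0$ when $s_A = s_B$. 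Hence $x_{\{A,B\}} = 1$ iff $|S \cap \{A,B\}| = 1$, which exactly matches the xor-distinguishing case of \Cref{thm:template-distinguishability}.

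For property (a), I would show that every feasible solution distinguishes all of $T_d$. Fix $t \in T_d$. Since every summand in the coverage constraint \eqref{eq:coverage} lies in $\{0,1\}$ and the sum is at least $1$, either some $s_A = 1$ with $A \in \mathit{SD}_t$, giving $\mathit{SD}_t \cap S \neq \emptyset$, or some $x_{\{A,B\}} = 1$ with $\{A,B\} \in \mathit{XOR}_t$, giving $|S \cap \{A,B\}| = 1$ by the preliminary step. In both cases \Cref{thm:template-distinguishability} yields that $\contract_S$ distinguishes $t$, so (a) holds for any feasible solution, independently of optimality.

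For property (b)---which concerns optimal solutions, as signalled by the phrase ``as few $\ldots$ as possible''---I would show that at optimality $\textit{fp}_t$ is exactly the indicator that $\contract_S$ distinguishes $t$, for each $t \in T_{nd}$. The constraints $\textit{fp}_t \geq s_A$ (for $A \in \mathit{SD}_t$) and $\textit{fp}_t \geq x_{\{A,B\}}$ (for $\{A,B\} \in \mathit{XOR}_t$) force $\textit{fp}_t = 1$ precisely when the distinguishability condition of \Cref{thm:template-distinguishability} is met; when it fails, all these lower bounds vanish, and since the objective minimizes $\sum_{t \in T_{nd}} \textit{fp}_t$, every unforced $\textit{fp}_t$ is clamped to $0$. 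Thus the objective equals the number of attacker-indistinguishable test cases that $\contract_S$ distinguishes, \ie the false positives of $S$.

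Finally, I would close the correspondence in the reverse direction, to certify that the minimum is taken over exactly the admissible contracts: given any $S$ distinguishing all of $T_d$, setting $s_A := [A \in S]$, $x_{\{A,B\}} := [\,|S \cap \{A,B\}| = 1\,]$, and $\textit{fp}_t := [\contract_S \text{ distinguishes } t]$ is feasible (coverage holds by \Cref{thm:template-distinguishability}) with objective value equal to the false positives of $S$. Hence optimal ILP solutions correspond to contracts with the fewest possible false positives, the secondary objective merely breaking ties by contract size. The hard part, and the only place where optimality rather than feasibility is essential, is the $\textit{fp}_t$ encoding: the constraints only lower-bound these variables, so the exact identification of $\textit{fp}_t$ with contract distinguishability holds solely at the optimum. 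I would also double-check the bookkeeping that each $x_{\{A,B\}}$ appearing in an $\textit{fp}_t$ constraint is indeed tied to $s_A, s_B$ by the xor inequalities, so that no indistinguishable test case can be ``hidden'' by an unconstrained $x$ variable.
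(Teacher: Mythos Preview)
Your proposal is correct and follows essentially the same approach as the paper's proof, which also reads off $S$ from the $s_A$ variables, invokes constraint~\eqref{eq:coverage} together with \Cref{thm:template-distinguishability} for part~(a), and appeals to the objective function for part~(b). Your version is considerably more detailed: you explicitly verify the xor encoding, argue that $\textit{fp}_t$ coincides with the distinguishability indicator only at optimality, and close the reverse direction to show the feasible region corresponds exactly to the admissible contracts---points the paper leaves implicit. Your final bookkeeping caveat is well taken: the paper only introduces the four xor inequalities for pairs in $\bigcup_{t \in T_d} \mathit{XOR}_t$, so an $x_{\{A,B\}}$ that appears solely in an $\textit{fp}_t$ constraint for $t \in T_{nd}$ would indeed be unconstrained from below and could be driven to $0$ by the minimizer even when $s_A \oplus s_B = 1$; this is a genuine gap in the paper's formulation that your analysis exposes rather than a flaw in your reasoning.
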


\revision{
Mohr et al.~\cite{Mohr24} employ a similar ILP for contract synthesis.
As discussed in \Cref{sec:synthesis:testing}, their work lacks the notion of xor-distinguishing atoms, which is needed to support contracts where multiple atoms share the same leakage function. 
As a consequence, their ILP differs from ours in two ways:
\begin{inparaenum}[(a)]
    \item It omits all constraints involving the $x_{\{A,B\}}$ variables.
    \item To ensure that every attacker-distinguishable test case is contract distinguishable, it includes the following constraint for every $t \in T_d$: $\sum_{A \in \mathit{D}_t}{s_A} \geq 1$,
where $\mathit{D}_t$ is the set of distinguishing atoms for test case $t$.
\end{inparaenum}
}

\subsection{Phase 2b: Bounded soundness verification}\label{sec:synthesis:bounded-verification}

The ILP-based synthesis from \Cref{sec:synthesis:synthesis} generates a new contract based on the attacker and template distinguishability derived from the test cases (\Cref{sec:synthesis:testing}).
Since the test cases may miss some leaks in the target processor, the contract generated from \emph{Phase 2a} may be unsound.
To discover leaks that are missed by the candidate contract, \tool{} performs a bounded verification step.
\revision{This is another key difference between \tool{} and the approach by Mohr et al.~\cite{Mohr24} (beyond those outlined in \Cref{sec:synthesis:testing}--\ref{sec:synthesis:synthesis}): while the latter simply synthesizes the most precise contract from an initial set of test cases, without any soundness guarantees, \tool{} employs bounded verification to discover missed leaks, as we describe next, and provides soundness guarantees (as we show in \Cref{sec:synthesis:unbounded-verification}).}

\mypara{Intuition}
\tool{} verifies whether the candidate contract $\contract_S$ captures all leaks visible by the attacker $\atk$ on the target CPU $\uarch$ when considering all possible executions up to a given length $k$.
That is, \tool{} verifies a bounded variant of \Cref{def:uarch-contract-satisfaction}.
For this, \tool{} encodes the contract verification task as a bounded model checking (BMC) problem.
Since microarchitectural contract satisfaction is a property defined over pairs of $\uarch$ executions, our encoding relies on self-composition~\cite{BartheDR11}, which reduces reasoning about pairs of executions to reasoning about a single execution, by first constructing a product circuit consisting of two copies of $\uarch$.
The bounded contract satisfaction property $\phi_{\mathit{ctrsat}}$ is then encoded on top of this  circuit.
The BMC either (a) falsifies $\phi_{\mathit{ctrsat}}$ and provides a counterexample---a test case with the same contract traces but different attacker traces---which can be used to synthesize a better contract, or (b) proves that the contract is sound up to bound $k$.
 
\mypara{Constructing the product circuit}
We construct a product circuit consisting of two copies of $\uarch$ executing in parallel.
To compare contract observations, which are produced only when instructions retire (\ie whenever the user-provided retirement predicate $\psi$ holds), we need to synchronize the two copies of $\uarch$ on retirement.
That is, whenever one copy of $\uarch$ retires an instruction but the other does not (\ie $\psi$ holds only on one of the two executions), the copy of $\uarch$ where $\psi$ holds is ``paused'' until the other one catches up.
To do so, \tool{} uses the stuttering product circuit construction introduced by Wang et al.~\cite{Wang23}, which constructs a product circuit synchronized on the retirement predicate $\psi$.

\mypara{Property}
We encode microarchitectural contract satisfaction on top of the product circuit into the $\phi_{\mathit{ctrsat}}^{C,\psi}$ property, where $C$ is the contract and $\psi$ is the retirement predicate, as follows.
Our encoding is parametric in: (a) the total bound $k$, and (b) the attacker bound $1 \leq b \leq k$.
Intuitively, the formula  $\phi_{\mathit{ctrsat}}^{C,\psi}$ works as follows:
\begin{asparaenum}
\item it assumes that, at cycle $0$, the two copies of $\uarch$ start from a valid initial state and with the same microarchitectural part,
\item it assumes that, for cycles $0$ to $k$, the two executions produce the same contract traces, \ie  whenever the retirement predicate $\psi$ holds on both executions, the contract observations are the same, 
\item it asserts that, for cycles $0$ to $b$, the two executions have the same attacker observations, \ie they are attacker-indistinguishable.
\end{asparaenum}

Note that the attacker bound $b$ is less than the total bound $k$, since contract observations are ``slower'' than attacker observations, as the former are produced only when instructions retire.
This requires looking ahead to check whether future contract observations may ``declassify'' a difference in attacker observations.
Since these differences are often caused by in-flight instructions, $k - b$ needs to be large enough to account for the retirement of these instructions to ensure that the associated observations are produced.
However, sizing $k-b$ to the worst-case results in large bounds that (a) slow down checking contract satisfaction in simpler cases, and (b) may result in  counterexamples with many instructions.\looseness=-1

To account for this, \tool{} uses a simple optimization, which is parametric in an instruction bound $i$.
As soon as a difference in attacker observations is detected (in the first $b$ cycles), we start decrementing $i$ whenever an instruction is retired by both executions.
\tool{} then terminates the BMC as soon as either $i$ reaches $0$ or the cycle bound $k$ is reached, whichever comes first.
This optimization allows us to dynamically adjust the number of cycles explored by the BMC.
As we show in \Cref{sec:evaluation:results}, this has two benefits: (a) it speeds up bounded verification since the BMC finds many simple counterexamples faster, without having to always explore all $k$ cycles, and (b) it often results in shorter counterexamples with fewer instructions, which restrict the synthesizer's search space.
The encoding of the property checked by \tool{} in terms of \texttt{assume} and \texttt{assert} statements in Verilog is given in \techReportAppendix{appendix:property}.

\revision{
    \begin{restatable}{proposition}{boundedverificationcorrectnessa}
    \label{prop:bounded-verification:correctness-1}
    Let $\mathbb{K}$ be the maximum number of cycles that the CPU under verification takes to retire an instruction.
    If the BMC falsifies $\phi^{C,\psi}_{\mathit{ctrsat}}$, with retirement predicate $\psi$, BMC bound $k$, attacker bound $1\leq b \leq k$, and instruction bound $i$, then there is 
    an attacker-distinguishable test case $T$ 
    where the prefixes of length  $\mathit{min}(\left\lfloor \frac{k}{\mathbb{K}}\right\rfloor,  i)$ of the $C$-traces are identical.
    \end{restatable}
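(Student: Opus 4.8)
The plan is to read off a counterexample from the falsified BMC instance and then establish the two required properties separately. By construction of the self-composed product circuit (\Cref{sec:synthesis:bounded-verification}), a falsifying assignment of $\phi^{C,\psi}_{\mathit{ctrsat}}$ is a pair of $\uarch$-executions that start from initial states $(\sigma,\mu_0)$ and $(\sigma',\mu_0)$ sharing the microarchitectural part $\mu_0$, that satisfy the property's assumption --- matching contract observations at every synchronized retirement within the window actually explored by the BMC --- and that violate its assertion --- a differing attacker observation at some cycle $\le b$. I take $T=(\sigma,\sigma')$ as the claimed test case, and note it is a bounded witness to the failure of the property in \Cref{def:uarch-contract-satisfaction}.

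Attacker-distinguishability is immediate: the assertion violation exhibits a cycle $\le b\le k$ at which the two attacker observations differ, so the attacker traces already differ on a finite prefix and hence $\atk(\implEval((\sigma,\mu_0)))\ne\atk(\implEval((\sigma',\mu_0)))$, which is exactly the definition of $T$ being attacker-distinguishable.

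The substantive part is lower-bounding the length of the agreeing contract-trace prefix. Because the stuttering product synchronizes the two copies on the retirement predicate $\psi$, the $j$-th synchronized retirement aligns the $j$-th contract observation of the two executions, and by ISA compliance ($\isasat{\uarch}{\arch}{\psi}$) these observations coincide with the $j$-th entries of $C(\archEval(\sigma))$ and $C(\archEval(\sigma'))$. The assumption therefore forces the first $N$ entries of the two $C$-traces to agree, where $N$ is the number of synchronized retirements inside the explored window; it remains to show $N\ge\min(\lfloor k/\mathbb{K}\rfloor, i)$ by a case split on why the BMC stopped. If it ran to the cycle bound $k$, then since every instruction retires within $\mathbb{K}$ cycles the window contains at least $\lfloor k/\mathbb{K}\rfloor$ synchronized retirements, so $N\ge\lfloor k/\mathbb{K}\rfloor$. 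If instead the instruction bound triggered early termination, then $i$ instructions were retired by both copies after the attacker difference was detected, so $N\ge i$. Either way $N\ge\min(\lfloor k/\mathbb{K}\rfloor, i)$, and hence the prefixes of that length are identical.

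The main obstacle I anticipate is the bookkeeping around the stuttering synchronization and the early-termination optimization, rather than any deep inference. Concretely, I must argue that, measured on the stuttering product, a window of $k$ cycles still guarantees $\lfloor k/\mathbb{K}\rfloor$ \emph{synchronized} retirements even though one copy is repeatedly paused while the other catches up; here the per-instruction cycle bound $\mathbb{K}$ is exactly what prevents the currently advancing copy from stalling long enough to lose retirements. I would also need to confirm that the optimization, which begins decrementing $i$ only after the first attacker difference, never causes the count to miss retirements that already occurred before the difference, so that $N\ge i$ remains safe in the second case. Once these two counting invariants are pinned down, the conclusion follows directly.
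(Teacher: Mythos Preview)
Your proposal is correct and follows essentially the same route as the paper's proof: extract the test case $T=(\sigma,\sigma')$ from the falsifying assignment, obtain attacker-distinguishability from the violated assertion, and lower-bound the number $N$ of synchronized retirements via the same case split on whether the BMC halted at the cycle bound $k$ (giving $N\ge\lfloor k/\mathbb{K}\rfloor$) or because the instruction counter hit zero (giving $N\ge i$). The only minor deviation is that you route through ISA compliance to speak about $C(\archEval(\cdot))$, whereas the paper stays at the microarchitectural level with $C(\implEvalFilter{\psi}(\cdot))$ and appeals directly to the correctness of the stuttering construction; both are fine, and your flagged bookkeeping concerns are exactly the points the paper glosses over with a w.l.o.g.\ assumption that the counterexample programs do not terminate early.
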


   From a falsification of $\phi^{C,\psi}_{\mathit{ctrsat}}$ we extract a synthetic attacker-distinguishable test case, which is then used to synthesize the next contract candidate via the ILP introduced in \Cref{sec:synthesis:synthesis}.
   To this end, we determine the test case's set of strongly-distinguishing atoms and xor-distinguishing pairs in the prefix of length $\mathit{min}(\left\lfloor \frac{k}{\mathbb{K}}\right\rfloor, i)$.
   This process is repeated until the BMC does not discover any more unaccounted leaks.
   Each iteration of the loop in Phase~2 rules out at least one of the template's contracts, and thus eventually the BMC proves $\phi^{C,\psi}_{\mathit{ctrsat}}$ and the synthesis loop terminates.
}

\revision{
    \begin{restatable}{proposition}{boundedverificationcorrectnessb}
        \label{prop:bounded-verification:correctness-2}
    Let $\mathbb{K}$ be the maximum number of cycles that the CPU under verification takes to retire an instruction.
    If the BMC proves $\phi^{C,\psi}_{\mathit{ctrsat}}$, with retirement predicate $\psi$, BMC bound~$k$, attacker bound $1\leq b \leq k$, and instruction bound $i$, then any $C$-indistinguishable test case  $T$ consisting of at most $\mathit{min}(\left\lfloor \frac{k}{\mathbb{K}}\right\rfloor,  i)$ instructions is also attacker indistinguishable for the first $b$ cycles.
    \end{restatable}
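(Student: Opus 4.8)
The plan is to argue by contradiction, turning a small attacker-distinguishable test case into a falsifying instance of $\phi^{C,\psi}_{\mathit{ctrsat}}$, which contradicts the assumption that the BMC proved it. Write $m = \min(\lfloor k/\mathbb{K}\rfloor, i)$ and fix a $C$-indistinguishable test case $T = (\sigma,\sigma')$ consisting of at most $m$ instructions, with $C = \contract_S$. Construct the two initial states $(\sigma,\mu_0)$ and $(\sigma',\mu_0)$; both lie in $\initImplStates$ and share the microarchitectural part $\mu_0$, so they are exactly the kind of pair the product circuit quantifies over.

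First I would bridge the two notions of contract trace. The proposition's $C$-indistinguishability is defined over the architectural executions $\archEval(\sigma)$ and $\archEval(\sigma')$, whereas the BMC compares contract observations produced at retirement on the microarchitecture, \ie over $\implEvalFilter{\psi}((\sigma,\mu_0))$ and $\implEvalFilter{\psi}((\sigma',\mu_0))$. Invoking ISA compliance $\isasat{\uarch}{\arch}{\psi}$ (condition (a)), the sequence of architectural states witnessed at retirement coincides with $\archEval(\sigma)$; since each $\contract_S$ atom is a function of the architectural state, the microarchitectural contract trace produced at retirement equals $\contract_S(\archEval(\sigma))$, and likewise for $\sigma'$. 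Hence $C$-indistinguishability of $T$ implies that the two retirement-level contract traces are identical.

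Next I would establish the timing bound that links instruction counts to cycles. Because each instruction retires within at most $\mathbb{K}$ cycles, and the stuttering product synchronizes the two copies on retirement so that the $j$-th joint retirement occurs by product cycle $j\cdot\mathbb{K}$, a test case with at most $m \le \lfloor k/\mathbb{K}\rfloor$ instructions has all of its joint retirements within the first $k$ product cycles. Consequently the entire finite contract trace of $T$ is produced inside the first $k$ cycles, so the contract observations of the two executions agree throughout the window compared by $\phi^{C,\psi}_{\mathit{ctrsat}}$. Moreover, since $T$ has at most $m \le i$ instructions, the instruction-bound optimization cannot truncate the comparison early: even after a difference in attacker observations is first detected, fewer than $i$ further jointly-retired instructions remain, so the counter never reaches $0$ before all of $T$'s instructions have retired, and the full contract trace is genuinely exercised. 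With both facts in hand, I close the contradiction: if $T$ were attacker-distinguishable in the first $b$ cycles, the product execution would start from a valid initial state with identical microarchitectural part, agree on every contract observation within the BMC window, yet violate the attacker-agreement assertion over cycles $0$ to $b$---precisely a falsifying instance of $\phi^{C,\psi}_{\mathit{ctrsat}}$, contradicting that the BMC proved it. Therefore $T$ is attacker indistinguishable for the first $b$ cycles.

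The main obstacle I anticipate is the careful accounting forced by the instruction-bound optimization together with the stuttering product: I must argue that the dynamic termination rule (stop when $i$ reaches $0$ or after $k$ cycles, whichever first) still lets the product compare the \emph{complete} contract trace of any test case with at most $m$ instructions, so that the contract-agreement hypothesis of the property is actually triggered rather than vacuously satisfied. The two ingredients that make this go through are the $\mathbb{K}$-based bound guaranteeing at least $\lfloor k/\mathbb{K}\rfloor$ joint retirements within $k$ product cycles, and the bound $m \le i$ guaranteeing the instruction counter does not cut the comparison short; both are encapsulated in the definition of $m$, which is why the proposition is stated with exactly this threshold.
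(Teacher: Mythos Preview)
Your proposal is correct and follows essentially the same approach as the paper: argue by contradiction, use the instruction count and the per-instruction retirement bound $\mathbb{K}$ to conclude that all retirements fit inside the BMC window, and then exhibit the test case as a falsifying run of $\phi^{C,\psi}_{\mathit{ctrsat}}$. The paper's proof is terser---it simply observes that the programs execute in at most $\mathbb{K}\cdot m$ cycles and defers the product-circuit reasoning to the stuttering construction from Wang et al.---whereas you spell out the ISA-compliance bridge between architectural and retirement-level contract traces and the accounting for the instruction-bound optimization; these are elaborations of the same argument rather than a different route.
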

    \Cref{prop:bounded-verification:correctness-2} formalizes what we refer to as a boundedly-sound contract in the overview in \Cref{fig:approach-overview}.
}

\subsection{Phase 3: Unbounded soundness verification}\label{sec:synthesis:unbounded-verification}

As soon as {Phase 2b} cannot discover further counterexamples, \tool{} performs an {unbounded verification} step to ensure that the synthesized contract is \emph{sound}, \ie all possible leaks in the target processor are captured by the contract.

For this, \tool{} uses the \textsc{LeaVe} contract verifier~\cite{Wang23} to prove that (unbounded) microarchitectural contract satisfaction holds for the current contract and the target CPU.
In a nutshell, \textsc{LeaVe} verifies unbounded contract satisfaction by (a) learning the strongest possible inductive invariant over pairs of contract-in\-dis\-tin\-guish\-a\-ble executions, and (b) use this invariant to prove that contract-in\-dis\-tin\-guish\-a\-ble executions are also attacker-in\-dis\-tin\-guish\-a\-ble.

If \textsc{LeaVe}'s verification passes, the synthesized contract is sound  (from \cite[Theorem 2]{Wang23}) and as precise as possible (from \Cref{thm:ilp-synthesis}) for the target processor.
If it fails, \tool{} terminates 
 and notifies the user.
This failure can be due to two reasons:
\begin{asparaitem} 
    \item If the contract is unsound, the user can re-run \tool{} with a larger bound for {Phase 2b} to identify the missed leaks.\looseness=-1
    \item If the contract is sound but \textsc{LeaVe} fails to verify it, the user can provide additional information to \textsc{LeaVe} (e.g., additional relational invariants) to make the unbounded proof go through~\cite{Wang23}.
\end{asparaitem}

\begin{restatable}{proposition}{overallcorrectness}
    \label{prop:phase3:correctness}
    If Phase 3 passes, the contract synthesized by \tool{} is {sound} and distinguishes as few attacker-distinguishable test cases as possible among the test cases explored during the synthesis process.
\end{restatable}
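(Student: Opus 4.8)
The plan is to split the statement into its two conjuncts — soundness and precision — and discharge each by invoking a result already established, so that the proposition reduces to gluing together the guarantees of Phase 2a and Phase 3. Throughout, let $\contract_S$ denote the contract that \tool{} returns, let $T$ be the full set of test cases accumulated during synthesis (the initial tests of Phase 1 together with every counterexample added in Phase 2b), and recall that, by construction, $\contract_S$ is the output of the \emph{last} ILP invocation (Phase 2a): its candidate survived Phase 2b without yielding a new counterexample and subsequently passed the unbounded check in Phase 3.

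For \emph{soundness}, I would argue as follows. Phase 3 passing means that \textsc{LeaVe} has verified microarchitectural contract satisfaction for $\contract_S$, the attacker $\atk$, and the retirement predicate $\phi$ on the target CPU $\uarch$. By the soundness of \textsc{LeaVe} (\cite[Theorem 2]{Wang23}), a successful verification entails that $\uarchctrsat{\contract_S}{\atk}{\phi}$ genuinely holds, i.e., any two executions with identical $\phi$-filtered contract traces are attacker-indistinguishable. Since $\uarch$ correctly implements $\arch$, i.e., $\isasat{\uarch}{\arch}{\phi}$, the equivalence of the two notions of contract satisfaction (\cite[Theorem 1]{Wang23}) transports this to $\ctrsat{\contract_S}{\atk}$ in the sense of \Cref{def:contract-satisfaction}. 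This is exactly the statement that $\contract_S$ captures all leaks of $\uarch$ observable by $\atk$, i.e., that $\contract_S$ is sound.

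For \emph{precision}, I would invoke \Cref{thm:ilp-synthesis} directly. Because $\contract_S$ is the solution of the final ILP instance — built from the complete explored test set $T$ via the strongly-distinguishing atoms $\mathit{SD}_t$ and xor-distinguishing pairs $\mathit{XOR}_t$ of each $t \in T$ — \Cref{thm:ilp-synthesis} guarantees that $\contract_S$ distinguishes every attacker-distinguishable test case in $T$ while distinguishing as few attacker-indistinguishable ones (false positives) as possible among all contracts induced by the template $\template$. The characterization of contract distinguishability underlying that ILP is precisely \Cref{thm:template-distinguishability}, so the optimality is genuinely with respect to the semantics of the explored test cases and not merely an artifact of the ILP encoding. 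This yields the precision conjunct: among the test cases explored during synthesis, no contract in $\template$ distinguishes strictly fewer attacker-indistinguishable cases while still covering all attacker-distinguishable ones.

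The main obstacle — and the step demanding the most care — is the bookkeeping that ties the two halves together, namely justifying that the returned $\contract_S$ is simultaneously the object verified in Phase 3 and the optimum of an ILP solved over the \emph{final} accumulated $T$. This hinges on termination of the Phase 2 loop: each iteration adds a counterexample that, by \Cref{prop:bounded-verification:correctness-1}, rules out at least one template contract, and since $\template$ induces only finitely many contracts the loop must halt with the ILP having been resolved on the complete explored test set before Phase 3 is attempted. Once this invariant is in place, soundness and precision are inherited verbatim from \cite[Theorems 1--2]{Wang23} and \Cref{thm:ilp-synthesis}, respectively, and no further semantic argument is needed.
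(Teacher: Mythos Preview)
Your proposal is correct and follows essentially the same approach as the paper's own proof: soundness is discharged by appealing to the \textsc{LeaVe} verification via \cite[Theorem~2]{Wang23}, and precision by invoking \Cref{thm:ilp-synthesis}. Your write-up is considerably more thorough than the paper's two-line argument---you add the bridge through \cite[Theorem~1]{Wang23}, the termination/bookkeeping invariant, and the reference to \Cref{thm:template-distinguishability}---but the core structure is identical (and note that both the statement and the paper's proof say ``attacker-distinguishable,'' which you have silently and correctly read as ``attacker-indistinguishable'' in line with \Cref{thm:ilp-synthesis}).
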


\section{Implementation}\label{sec:implementation}

In this section, we present \tool{}, a prototype that implements our synthesis approach from \Cref{sec:synthesis} for CPU designs in Verilog.
Our prototype relies on the Yosys Open Synthesis Suite~\cite{yosys} for processing Verilog circuits, the Icarus Verilog simulator~\cite{iverilog} for simulating test cases, the Google OR-Tools~\cite{googleILP} for the ILP, the Yices SMT solver~\cite{yices} for the bounded verification, and the \textsc{LeaVe} contract verification tool~\cite{Wang23} for the unbounded verification.
\tool{}'s implementation is open-source and available at~\cite{artifact}.

\mypara{Inputs}
\tool{} takes as input 
\begin{inparaenum}[(1)]
\item the target processor $\uarch$ implemented in Verilog,
\item a Verilog expression $e$ over $\uarch$  expressing the retirement predicate,
\item a Verilog expression $\atk$ over $\uarch$ modeling the microarchitectural attacker,
\item the bounds for the bounded verification, and
\item the additional inputs for the unbounded verification in \textsc{LeaVe} (e.g., relational invariants).
\end{inparaenum}
The user can also provide expressions for initializing parts of $\uarch$'s state.\looseness=-1 %

\mypara{Contract templates}
\tool{} implements multiple contract templates (described in \Cref{sec:evaluation:methodology}) that capture different classes of instruction-level leaks.
Even though these templates are defined in terms of the RISC-V ISA, they need to be instantiated for each new target processor since different processors often implement the same architectural elements (e.g., the register file) with different signals in Verilog.
To address this, all our templates and atoms (across the entire \tool{}'s pipeline) are implemented in terms of the RISC-V formal interface~\cite{rvfi} (RVFI), which provides a common interface to the architectural state of a RISC-V core for testing and verification.
This interface, which is implemented by several open-source cores,  significantly simplifies applying \tool{} to a new CPU that already implements the RVFI.

\mypara{Test generation}
\tool{} can accept user-provided test cases as input.
Additionally, \tool{} implements a test generation strategy for automatically synthesizing a given number of test cases. 
For each leakage function in the implemented templates, we also implement the associated modifier functions, which \tool{} uses to generate test cases.
\tool{}'s test generation strategy follows the process described in \Cref{sec:synthesis:testing}: (a) we first randomly generate an architectural state $\sigma$ (consisting of a RISC-V program and initial values for registers), (b) we identify one of the instructions in the program, and (c) we use one of the applicable modifier functions for the selected instruction to generate the other initial state $\sigma'$.

\mypara{Workflow}
\tool{} implements the workflow described in \Cref{sec:synthesis}.
First, it generates a set of test cases, simulates them, and compute their attacker and template-distinguishability w.r.t. the attacker and selected template.
Then, it alternates between synthesis and bounded verification to come up with precise candidate contracts. %
Finally, it verifies whether the candidate contract is sound with the \textsc{LeaVe} (unbounded) contract verifier and terminates.

\newcommand{\DarkRiscv}{\textbf{DarkRISCV}}
\newcommand{\DarkRiscvTwoStages}{\DarkRiscv{}\textbf{-2}}
\newcommand{\DarkRiscvThreeStages}{\DarkRiscv{}\textbf{-3}}
\newcommand{\Sodor}{\textbf{Sodor}}
\newcommand{\SodorTwoStages}{\Sodor{}\textbf{-2}}
\newcommand{\SodorFiveStages}{\Sodor{}\textbf{-5}}
\newcommand{\SodorOneStage}{\Sodor\textbf{-1}}
\newcommand{\Ibex}{\textbf{Ibex}}
\newcommand{\IbexSmall}{\Ibex{}\textbf{-small}}
\newcommand{\IbexMul}{\Ibex{}\textbf{-mult-div}}
\newcommand{\IbexCache}{\Ibex{}\textbf{-cache}}
\newcommand{\BaseCtrTemplate}{\textbf{B}}
\newcommand{\AlignedCtrTemplate}{\textbf{A}}
\newcommand{\BranchCtrTemplate}{\textbf{BT}}
\newcommand{\ValueCtrTemplate}{\textbf{V}}
\newcommand{\DependenciesCtrTemplate}{\textbf{D}}
\newcommand{\InstructionCtrTemplate}{\textbf{I}}
\newcommand{\RegisterCtrTemplate}{\textbf{R}}
\newcommand{\MemoryCtrTemplate}{\textbf{M}}

\section{Evaluation}\label{sec:evaluation}

In this section, we report on the use of \tool{} to synthesize sound and precise leakage contracts.
We first introduce the methodology we followed for our evaluation (\Cref{sec:evaluation:methodology}): the processors we analyze, the contract templates and attacker we consider, and the experimental setup.
In our evaluation (\Cref{sec:evaluation:results}), we address these research questions:
\begin{asparaitem}
\item{\textbf{RQ1:}} Can \tool{} synthesize sound and precise leakage contracts from processor designs at RTL?

\item{\textbf{RQ2:}} What is the impact of the number of initial test cases on the synthesized contract's precision and on the synthesis time?

\item{\textbf{RQ3:}} What is the impact of different contract templates on the synthesized contract's precision?

\item{\textbf{RQ4:}} Does \tool{} synthesize contracts that are more precise than leakage contracts derived in prior work?

\item{\textbf{RQ5:}} What is the impact of different property encodings on the verification time of Phase 2b (\Cref{sec:synthesis:bounded-verification})?
\end{asparaitem}

We remark that all  benchmarks and scripts for reproducing our results are available at~\cite{artifact}.

\subsection{Methodology}\label{sec:evaluation:methodology}

\mypara{Benchmarks}
We analyze the following CPUs:
\begin{asparaitem}

    \item \DarkRiscv: An in-order RISC-V core that implements the RV32I instruction sets~\cite{darkriscv}.
    We analyzed the 2-stage \DarkRiscvTwoStages{} and the 3-stage \DarkRiscvThreeStages{} implementations. %
    
    \item \SodorTwoStages{}: An educational 2-stage RISC-V core~\cite{sodor} that implements the RV32I instruction set.\footnote{We translate Sodor (written in Chisel) into Verilog for the analysis with \tool{}.\looseness=-1}

    \item \Ibex{}: An open-source, production-quality 32-bit RISC-V CPU core~\cite{ibex}.\footnote{We translate Ibex (written in SystemVerilog) into Verilog for the analysis with \tool{}.\looseness=-1} %
    We target the default ``small'' configuration, %
    which 
    has two stages and supports the RV32IM instruction set.
    We study three variants:
    (1) \IbexSmall{} is the default ``small'' configuration with constant-time multiplication (three cycles) and without caches,
    (2) \IbexCache{} is \IbexSmall{} extended with a simple (single-line) cache, and
    (3) \IbexMul{} employs a non-constant-time multiplication unit whose execution time depends on the operands~\cite{multibex}.
    For all variants, compressed instructions are disabled.

\end{asparaitem}

\mypara{Contract templates}
We consider the following basic templates:
\begin{asparaitem}
\item \textbf{Instruction Leaks} (\InstructionCtrTemplate{}): atoms exposing  information about an instruction's encoding  (op code, destination and source registers, immediate values).
\item \textbf{Register Leaks} (\RegisterCtrTemplate{}): atoms exposing the value of an instruction's source and destination registers, and the program counter.
\item \textbf{Memory Leaks} (\MemoryCtrTemplate{}): atoms exposing the accessed memory address and content read from/written to  memory.
\item \textbf{Alignment Leaks} (\AlignedCtrTemplate{}):  two atoms that expose the alignment of a memory access: $IS\_ALIGNED$ exposes whether the last two bits of the memory address are $00$ and $IS\_HALF\_ALIGNED$ exposes whether the last two bits of the memory address are not $11$.\looseness=-1
\item \textbf{Branch Leaks} (\BranchCtrTemplate{}):  atoms that expose the outcome of branch instructions. For conditional jumps, the atoms expose if the branch is taken. For unconditional ones, the atoms return the constant $1$. %
\item \textbf{Value Leaks} (\ValueCtrTemplate{}):  atoms exposing selected information about values of  source and destination registers: (a) whether these values are equal to $0$, and (b) the logarithm of the value. These atoms are inspired by known value-dependent leaks in the \Ibex{} core~\cite{Wang23,Mohr24}.
\end{asparaitem}

Additionally, we consider templates obtained by combining the basic ones listed above, and we denote such combinations with $+$.
In particular, the \textbf{Base Leaks} (\BaseCtrTemplate) template consists of $\InstructionCtrTemplate{} + \RegisterCtrTemplate{} + \MemoryCtrTemplate{}$, that is, the template containing all atoms from $\InstructionCtrTemplate{}$,  $\RegisterCtrTemplate{}$, and $\MemoryCtrTemplate{}$.

\mypara{Attacker} 
We consider an attacker that observes when instructions retire, \ie 
 when the retirement predicate holds.

\mypara{Unbounded verification setup}
For the unbounded verification using \textsc{LeaVe}, for all benchmarks, we re-use the initial-state invariants and the additional candidate inductive invariants manually constructed by Wang et al.~\cite{Wang23} for verifying the same cores.

\mypara{Experimental setup} 
All our experiments ran on an AMD Ryzen Threadripper PRO 5995WX with 64 cores and 512 GB of RAM. Each experiment ran in a dockerized Ubuntu 24.04 with OpenJDK 21. 
\tool{} used Google OR-Tools version 9.10, Yosys version 0.48, Icarus Verilog version s20250103, and Yices version 2.6.5.

\mypara{Evaluating precision}
Leakage contracts classify pairs of executions as distinguishable or indistinguishable.
To measure the precision of synthesized contracts, we use the standard notion of precision used for binary classifiers, following~\cite{Mohr24}.
Given a target processor, a contract $C$, and a validation set $V$ of test cases, the precision of $C$ with respect to $V$ is $\frac{\mathit{FP}}{\mathit{TP}+\mathit{FP}}$, where $\mathit{TP}$ is the number of test cases that are contract and attacker distinguishable and $\mathit{FP}$ is the number of test cases that are contract distinguishable but attacker indistinguishable.
For each benchmark, we generate a validation set consisting of $2048K$ test cases.

\subsection{Experimental results}\label{sec:evaluation:results}

\mypara{RQ1 -- Synthesis for open-source RISC-V cores}
To evaluate whether \tool{} can  synthesize sound and precise leakage contracts from open-source cores, we apply \tool{} to all benchmarks from \Cref{sec:evaluation:methodology}. %
For each core, we use \tool{} to synthesize a contract against the $\BaseCtrTemplate + \AlignedCtrTemplate + \BranchCtrTemplate + \ValueCtrTemplate$ template
with a set of $128K$ test cases.\looseness=-1

\begin{table*}[h]

    \caption{Results of \tool{}'s synthesis campaign.
}\label{tab:verification:results}
\vspace{-10pt}
    \centering
    \small
\begin{tabular}{l c c c c c c c c c c c }
\toprule
    \multirow{2}{*}{\textit{Processor}} & \multirow{2}{*}{\textit{Iterations}} & \textit{BMC} & \textit{ATK} & \textit{Instr.}   & \multicolumn{4}{c}{\textit{Synthesis time} (hh:mm:ss)} & \multirow{2}{*}{\textit{\# atoms}} & \multirow{2}{*}{\textit{Precision}}  \\
     &  & \textit{bound}& \textit{bound}  & \textit{bound} & \textit{Phase 1} & \textit{Phase 2a} & \textit{Phase 2b} & \textit{Phase 3} & &&     \\
    \midrule
    \DarkRiscvTwoStages{}   & 1     & 15    & 10    & 1 & 00:02:41  & 00:00:17 &	01:22:44 & 43:16:44 &  8    & 0.998\\ %
    \DarkRiscvThreeStages{} & 1     & 20    & 15    & 1 & 00:03:14	& 00:00:21 &	04:52:56 & 18:51:26 &  8    & 0.999 \\ %
    \SodorTwoStages{}       & 1     & 6     & 5     & 1 & 00:04:38  & 00:00:09 &	00:13:31 & 03:01:59 &  8    & 0.998 \\
    \IbexSmall{}            & 4     & 51    & 12    & 1 & 00:07:08  & 00:02:09 &	02:38:33 & 22:09:32 &  24   & 0.999 \\ %
    \IbexCache{}            & 24    & 51    & 12    & 1	& 00:07:18  & 00:14:45 &	02:11:28 & 21:43:59 &  27   & 0.579 \\ %
    \IbexMul{}              & 2     & 51    & 12    & 1	& 00:07:32  & 00:01:12 &	03:28:17 & 24:38:21 &  24   & 0.999 \\ %
    \bottomrule
\end{tabular}
\vspace{-5pt}
\end{table*}

\Cref{tab:verification:results} summarizes the results of our evaluation. %
The table reports 
\begin{inparaenum}[(a)]
\item the number of iterations for the synthesis loop,
\item the static BMC bound~$k$, the attacker bound~$b$ and the instruction bound~$i$ used, 
\item the total synthesis time split between Phase 1, Phase 2a, Phase 2b, and Phase 3,
\item the number of atoms in the final synthesized contract, and
\item the precision of the synthesized contract.
\end{inparaenum}

We highlight the following findings:
\begin{asparaitem}
\item \tool{} successfully synthesizes sound contracts for all target CPUs, and all contracts pass the unbounded verification phase.

\item The synthesized contracts have precision higher than $0.99$ for all targets except for \IbexCache.
The low precision in the \IbexCache{} case is due to the core having a single-line cache, which leaks whether a given address has been accesses consecutively or not.
This leak cannot be precisely captured in the $\BaseCtrTemplate + \AlignedCtrTemplate + \BranchCtrTemplate + \ValueCtrTemplate$ template, and \tool{} falls back to exposing the values of all memory accesses.

\item For all cores, \tool{} can synthesize a contract that passes the bounded verification (without Phase 3) in less than 5 hours.
In several cases, the unbounded verification of the contract using \textsc{LeaVe}~\cite{Wang23} dominates the overall execution time.

\item For \DarkRiscvTwoStages, \DarkRiscvThreeStages, and \SodorTwoStages, \tool{} can synthesize a sound contract (each one with $8$ atoms) directly from the initial test cases.
For \Ibex{}, instead, \tool{} needs to iterate between Phases 2a and 2b, before converging on the final contract.
Furthermore, the synthesized contracts contain more atoms than those for \DarkRiscv{} and \SodorTwoStages{}, which is consistent with \Ibex{} being more complex than the other benchmarks.

\item 
While \SodorTwoStages, \DarkRiscvTwoStages, and \DarkRiscvThreeStages{} only leak through control-flow operations, contracts for the \Ibex{} cores capture additional leaks.
The contracts for all \Ibex{} cores expose:
\begin{inparaenum}[(a)] 
    \item if the second operand of \texttt{div}, \texttt{divu}, \texttt{rem}, \texttt{remu} instructions is $0$,
    \item if memory accesses are aligned or not (only for stores for \IbexCache{}),
    \item if the core is executing a memory read or write, and
    \item if the core is executing a \texttt{mul}, \texttt{mulh}, \texttt{mulhu}, or \texttt{mulhsu}.
\end{inparaenum}
Furthermore, the contract for \IbexCache{} exposes the entire address of memory loads (capturing the cache leak), whereas the contract for \IbexMul\ exposes the logarithm of the second operand for \texttt{mul} instructions (capturing the leak introduced by the slow-multiplication unit).\looseness=-1
\onlyTechReport{\Cref{tab:contract:leaks} in \Cref{appendix:data} summarizes these leaks.}
\end{asparaitem}

\mypara{RQ2 -- Impact of the number of  test cases}
\tool{} uses test cases to generate the candidate contract.
To evaluate the impact of the number of test cases on the synthesis process, we use \tool{} to synthesize a contract for all our benchmarks using the  $\BaseCtrTemplate + \AlignedCtrTemplate + \BranchCtrTemplate + \ValueCtrTemplate$ %
 template and different number of test cases (from 0 to $2048K$).

\Cref{fig:eval:rq2} reports the results of our experiments.
In particular, \Cref{fig:eva-time} reports the execution time (excluding the unbounded verification)\footnote{
We omit the unbounded verification time because (a) it is roughly independent on  the number of test cases and (b) it often dominates the rest of the process.} needed to synthesize a sound contract for the different configurations, \Cref{fig:eva-testcase} reports the precision of the synthesized contract, \revision{and \Cref{fig:eva-time-share} reports the fraction of time spent on synthesizing the first contract candidate, which includes all of Phase~1 and the first call to the ILP solver (Phase~2a), relative to the total execution time (excluding unbounded verification).}

We highlight the following findings:
\begin{asparaitem}
\item A very small number of initial test cases often results in a high synthesis time.
The reason for this is that small sets of initial test cases often exercise only a limited behavior of the target processor, which results in a greater number of iterations of the synthesis loop.\looseness=-1 %
\item Increasing the number of test cases is beneficial as long as the simulation time required to find a new leak is shorter than the time required to find a new counterexample via bounded verification.
At around 1024k/2048k test cases the probability of finding new leaks via simulation is low and the simulation time starts to dominate.

\item A very small number of test cases often results in low precision contracts.
This is because Phase 2a (\Cref{sec:synthesis:synthesis}) relies on attacker-indistinguishable test cases to optimize the precision of the synthesized contract.
Small sets of test cases do not provide enough information to guide synthesis and often leads to imprecise contracts. 

\item Increasing the number of test cases usually increases the precision of the synthesized contract, even though there are cases where the  contract gets worse with more test cases.
For all cases except $\IbexCache$, the precision of the synthesized contract stabilizes at 1 quickly (at most after reaching $32K$ test cases).
For $\IbexCache$, the precision stabilizes at $0.579$.
This is due to the limitations of the template.
All synthesized contracts for $\IbexCache$ expose the accessed memory address (to capture cache leaks), even though $\IbexCache$ implements a single-line cache so the actual leak only exposes whether a given address has been accessed consecutively or not, which cannot be captured in our template.
\end{asparaitem}

\begin{figure}
    \centering
    \begin{subfigure}[b]{0.49\textwidth}
        \centering
        \input{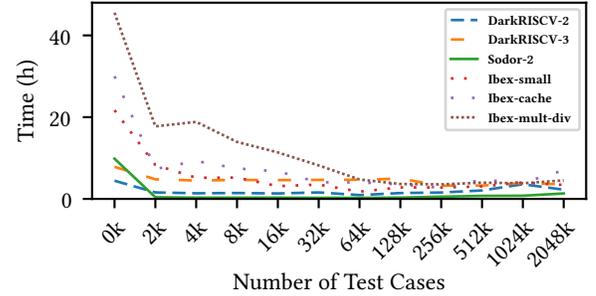}
        \vspace{-10pt}
        \caption{Synthesis time (excluding unbounded verification)} %
        \label{fig:eva-time}
    \end{subfigure}
    \hfill
    \begin{subfigure}[b]{0.49\textwidth}
        \centering
        \input{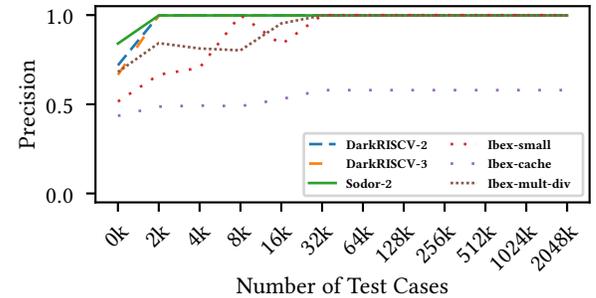}
        \vspace{-10pt}
        \caption{Precision of the synthesized contract} %
        \label{fig:eva-testcase}
    \end{subfigure}
    \revision
{    \begin{subfigure}[b]{0.49\textwidth}
        \centering
        \input{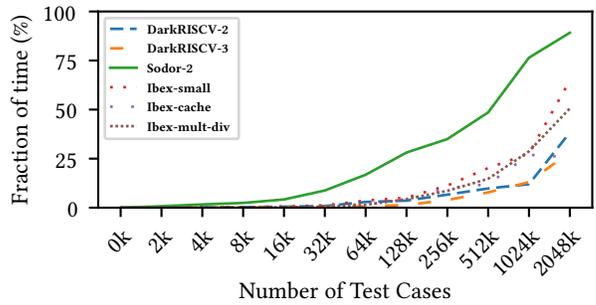}
        \vspace{-10pt}
        \caption{Fraction of time spent on synthesizing the first contract candidate (\ie Phase 1 and one iteration of Phase 2a) relative to the total synthesis time (excluding unbounded verification)
        }
        \label{fig:eva-time-share}
    \end{subfigure}}
    \vspace{-15pt}
    \caption{Impact of the number of test cases on contract synthesis}      
    \label{fig:eval:rq2}
    \vspace{-10pt}
\end{figure}

\mypara{RQ3 -- Impact of contract templates}
The user-provided contract template defines the search space explored by \tool{}.
To study the impact of different templates, we use \tool{} to synthesize a contract with increasingly more expressive templates and measure the contract's precision.
For each core and  template, we run \tool{} with the same bounds as in \Cref{tab:verification:results} and with $64K$ test cases.\looseness=-1

\Cref{fig:eva-ctr-families} summarizes our results. We highlight these findings:
\begin{asparaitem}

    \item Template $\BaseCtrTemplate = \InstructionCtrTemplate + \RegisterCtrTemplate + \MemoryCtrTemplate$ is the first template for which \tool{} can synthesize sound contracts for all  target CPUs.
    For templates $\InstructionCtrTemplate$, $\RegisterCtrTemplate$, $\MemoryCtrTemplate$ in isolation, \tool{}  cannot synthesize a sound contract, since no sound contract exists in these templates. %

    \item $\BaseCtrTemplate \to \BaseCtrTemplate + \AlignedCtrTemplate:$ For the \Ibex{} cores, additionally considering template \AlignedCtrTemplate{} results in more precise contracts.
    Executing aligned and non-aligned memory accesses take different times.
    In template~$\BaseCtrTemplate$, this can only be captured by exposing the entire memory address.
    In~$\BaseCtrTemplate + \AlignedCtrTemplate$, \tool{} synthesizes a more precise contract that exposes only whether memory accesses are aligned or not.

    \item $\BaseCtrTemplate + \AlignedCtrTemplate \to \BaseCtrTemplate + \AlignedCtrTemplate + \BranchCtrTemplate:$ 
    Additionally considering the $\BranchCtrTemplate$ template increases precision.
    The branch-taken and branch-non-taken cases have different execution times.
    Without $\BranchCtrTemplate$, \tool{} is forced to synthesize contracts that expose the values of the registers used to compute the branch. 
    With $\BranchCtrTemplate$, \tool{}  synthesizes more precise contracts that only expose whether the branch is taken.

    \item $\BaseCtrTemplate + \AlignedCtrTemplate + \BranchCtrTemplate \to \BaseCtrTemplate + \AlignedCtrTemplate + \BranchCtrTemplate + \ValueCtrTemplate:$ 
    For the \Ibex{} cores, additionally considering the template $\ValueCtrTemplate$ further increases precision.
    These cores rely on a multiplication/division unit where (a) the execution time of multiplications is proportional to the logarithm of one of its operands, and (b) there is a fast path for division if the divisor equals to $0$.
    With \ValueCtrTemplate, \tool{} can synthesize contracts that characterize such leaks, rather than having to expose entire operands.
\end{asparaitem}

\begin{figure}
    \centering
    \input{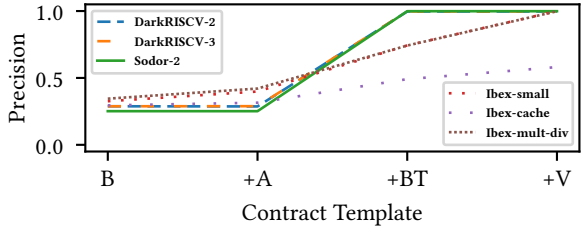}
    \vspace{-15pt}
    \caption{Impact of template expressiveness on precision} %
    \label{fig:eva-ctr-families}
    \vspace{-15pt}
\end{figure}

\mypara{RQ4 -- Comparison with contracts from prior work}
Prior work have derived leakage contracts for some \mbox{RISC-V} cores.
We focus on contracts from \textsc{LeaVe}~\cite{Wang23}, Mohr et al.~\cite{Mohr24}, ConjunCT~\cite{conjunct}, VeloCT~\cite{h-houdini}, and RTL2M$\mu{}$PATH~\cite{rtl2mupath} for the \IbexSmall{} CPU and we compare them (in terms of precision) to the contract synthesized by \tool{} using $\BaseCtrTemplate + \AlignedCtrTemplate + \BranchCtrTemplate + \ValueCtrTemplate$ as template, an initial set of $64K$ test cases, and the bounds from~\Cref{tab:verification:results}. 
Like \tool{}, \cite{Wang23,Mohr24,conjunct,h-houdini,rtl2mupath} target an attacker that observes retirement time.
\revision{
Both \textsc{LeaVe}~\cite{Wang23} and  ConjunCT~\cite{conjunct} have been applied to \IbexSmall{}, and we consider the contracts reported in the corresponding papers.
For Mohr et al.~\cite{Mohr24}, we consider a contract synthesized with $64K$ initial test cases.
VeloCT~\cite{h-houdini} (which is an improvement over ConjunCT) lacks an evaluation against \IbexSmall\ (and it has only been applied to Boom in~\cite{h-houdini}); we contacted the paper's authors who confirmed that VeloCT produces the same contract as ConjunCT for Ibex, therefore for VeloCT we use the same contract as for ConjunCT.
Finally, for RTL2M$\mu{}$PATH~\cite{rtl2mupath} (which also lacks an evaluation against \IbexSmall{} and is built on top of the JasperGold commercial verification tool),  we consider the most precise contract synthesized by \tool{} for \IbexSmall{} against a template capturing RTL2M$\mu{}$PATH's contracts, \ie a template that includes atoms that leak the operands of instructions and forces a leakage of the instruction word and the program counter.
}

We measure the precision of the four contracts against two different validation set: \textbf{different-programs} and \textbf{same-programs}.
The \textbf{different-programs} validation set consists of 2048K test cases where the two programs in each test case might be different, which we generate with \tool{}'s default test-generation strategy.
The \textbf{same-programs} validation set consists of 2048K test cases where the two programs in each test case are the same and only the initial inputs differ.
We consider also \textbf{same-programs} since~\cite{Wang23,conjunct,h-houdini,rtl2mupath} assume that the program itself is public and known to the attacker.

\revision{\Cref{fig:eva-comparison} summarizes the precision of all contracts.
The contract generated by \tool{} is significantly more precise than those from~\cite{Wang23,conjunct,h-houdini,rtl2mupath} when considering the \textbf{different-programs} validation set.
This is unsurprising as the contract templates underlying~\cite{Wang23,conjunct,h-houdini,rtl2mupath} treat as contract distinguishable any two executions involving different programs/instruction streams.
However, the contract produced by \tool{} is also more precise than the ones from~\cite{Wang23,conjunct,h-houdini,rtl2mupath} under the \textbf{same-programs} validation set.
This is due to the fact that~\cite{Wang23,conjunct,h-houdini,rtl2mupath} consider coarser templates than $\BaseCtrTemplate + \AlignedCtrTemplate + \BranchCtrTemplate + \ValueCtrTemplate$, which require them to expose more information than needed to capture the actual leaks in \IbexSmall.}

\revision{
Finally, we compare the sound contracts from \tool{} with the (potentially unsound) ones produced by the approach from Mohr et al.~\cite{Mohr24}.
For this, we synthesize contracts using  \tool{} and the approach from~\cite{Mohr24} for  \IbexSmall\ with different numbers of initial test cases (from 1K to 2048K).
For all synthesized contracts, we (a) measure precision against the \textbf{different-programs} validation set, and (b) for the contracts from \cite{Mohr24} we also checked their soundness.

\Cref{fig:eva-date} reports the soundness/precision of all synthesized contracts.
The contracts produced by \tool{} are more precise (except for those produced with 1K and 2K initial test cases) than those produced by Mohr et al.~\cite{Mohr24}.
In particular, with the maximum number of test cases (2048k), Mohr et al.'s approach reaches a precision of 0.69 whereas \tool{}'s contract attains a precision of 0.999. 
None of the contracts generated by the approach from Mohr et al.~\cite{Mohr24} is sound whereas, as expected, all contracts from \tool{} are sound. 
}

\begin{table*}[h]
\begin{threeparttable}
    \caption{Comparison of precision with contracts from previous works~\cite{Wang23,conjunct,h-houdini,rtl2mupath,Mohr24}}\label{fig:eva-comparison}
    \centering
    \small
\begin{tabular}{l c c c c c c }
\toprule
    & LeaVe~\cite{Wang23} & \revision{Mohr et al.~\cite{Mohr24}}\tnote{$\ddagger$} & ConjunCT~\cite{conjunct} & \revision{VeloCT~\cite{h-houdini}} & \revision{RTL2M$\mu{}$PATH~\cite{rtl2mupath}} & Our work\\
    \midrule
    \textbf{same-programs} & 0.774 & \revision{0.855} &0.544 & \revision{0.544} & \revision{0.616} &1.000\\
    \textbf{different-programs} &0.054 & \revision{0.784} &0.056 & \revision{0.056} & \revision{0.064} &0.999 \\
    \bottomrule
\end{tabular}
\revision{
    \begin{tablenotes}
        \footnotesize
        \item[$\ddagger$] Unsound
    \end{tablenotes}
}
\end{threeparttable}
\end{table*}

\newcommand{\daggertwo}{\tnote{$\ddagger$}}

\begin{table*}[h]
    \begin{threeparttable}
    \caption{Comparison of precision and soundness with \cite{Mohr24}}\label{fig:eva-date}
    \centering
    \small
\begin{tabular}{l c c c c c c c c c c c c c}
\toprule
     & 1K &  2K & 4K & 8K & 16K & 32K & 64K & 128K & 256K & 512K & 1024K & 2048K\\
    \midrule
    Mohr et al.~\cite{Mohr24} & 0.706\daggertwo & 0.738\daggertwo & 0.697\daggertwo & 0.733\daggertwo & 0.750\daggertwo & 0.781\daggertwo & 0.784\daggertwo & 0.784\daggertwo & 0.784\daggertwo & 0.784\daggertwo & 0.699\daggertwo & 0.693\daggertwo \\
    Our work & 0.515\tnote{$\dagger$} & 0.667\tnote{$\dagger$} & 0.704\tnote{$\dagger$} & 0.999\tnote{$\dagger$} & 0.835\tnote{$\dagger$} & 0.999\tnote{$\dagger$} & 0.999\tnote{$\dagger$} & 0.999\tnote{$\dagger$} & 0.999\tnote{$\dagger$} & 0.999\tnote{$\dagger$} & 0.999\tnote{$\dagger$} & 0.999\tnote{$\dagger$} \\
    \bottomrule
\end{tabular}
    \begin{tablenotes}
        \footnotesize
        \item[$\dagger$] Sound \hspace{3mm} $\ddagger$ Unsound
    \end{tablenotes}
\end{threeparttable}
\vspace{-5pt}
\end{table*}

\mypara{RQ5 -- Impact of property encoding on bounded verification}
We use \tool{} to synthesize contracts for the \IbexSmall{} core with $\BaseCtrTemplate + \AlignedCtrTemplate + \BranchCtrTemplate + \ValueCtrTemplate$ as template.
We implement and compare two encodings:
the base one (denoted \textbf{b}) where the BMC always explores up to $51$ cycles (same bound as in \Cref{tab:verification:results}), and 
the optimized one (denoted \textbf{o} and described in \Cref{sec:synthesis:bounded-verification}) where the BMC stops the exploration as soon as: (a) one instruction retires after observing a difference on the attacker trace, or (b) it reaches $51$ cycles.

\Cref{fig:eva-encoding} summarizes the results. We highlight these findings:
\begin{asparaitem}
\item The optimized encoding results in faster (bounded) synthesis. 
However, the speedup is more significant with a lower number of initial test cases.
Part of this speedup is due to a reduced number of iterations along the synthesis loop.

\item The optimized encoding results in shorter counterexamples, \ie programs with fewer instructions.
Intuitively, shorter counterexamples are better since they reduce the synthesizer's search space.\looseness=-1
\end{asparaitem}

\newcommand{\savespace}[1]{\hspace{-6mm}#1}

\begin{table}
    \caption{ Impact of property encoding on bounded verification} \label{fig:eva-encoding}
    \centering
    \small
    \vspace{-7pt}
\begin{tabular}{l l r r r r r r r r r }
\toprule
    \multicolumn{2}{c}{\# test cases}                    & 2K        & 4K        & 8K      & 16K     & 32K     & 64K     & \savespace{128K}    & \savespace{256K}    \\
    \midrule
    Time    & \textbf{o}        & 8:18   & 5:10   & 5:12 & 3:07 & 3:30 & 1:46 & 2:48 & 2:47\\
    (hh:mm)                  & \textbf{b}    & \savespace{19:27}  & 9:37   & 7:39 & 6:19 & 3:58 & 2:32 & 1:41 & 2:43 \\
    \multirow{2}{*}{Iterations} & \textbf{o}             &  61       & 40        & 27      & 13      & 10      & 7       & 3       & 3              \\
                            & \textbf{b}                 &  110      & 48        & 37      & 33      & 15      & 9       & 2       & 3              \\
    avg. \# instr.\hspace{-5mm}              & \textbf{o}              &  1.1      & 1.3       & 1.1     & 1.2     & 1.2     & 1.3     & 1.7     & 1.7        \\
    in cex                  & \textbf{b}                 &  5.8      & 4.4       & 5.4     & 5.3     &  10.0     & 7.6       & 3.0       & 7.0        \\
    \bottomrule
\end{tabular}
\vspace{-5pt}
\end{table}

\section{Discussion}\label{sec:discussion}

\mypara{Verification bounds}
The choice of bounds for the bounded verification step (Phase 2b) can affect the success of the synthesis process and its runtime.
On the one hand, too short a bound can result in a candidate contract that passes the bounded verification but fails the final unbounded verification step (Phase 3), since the bounded verification failed in discovering some leaks.
On the other hand, too large a bound can significantly slow down the entire synthesis process, in particular when the synthesis loops requires many iterations.
In practice, in our experiments starting with a total bound of $i*k$ and an instruction bound of $i-1$, where $k$ is the maximum number of cycles for retiring an instruction and $i$ is the maximum number of in-flight instructions, was sufficient to successfully synthesize the contracts.

\mypara{Limitations on supported contracts}
Our formal model (\Cref{sec:formal-model}) is restricted to \emph{sequential} leakage contracts that only capture leaks generated by architectural instructions.
Using the terminology of~\cite{contracts2021,oleksenko2022revizor}, \tool{} only synthesizes ``leakage clauses'' for the sequential execution model.
Furthermore, given that the leakage functions in our atoms map single architectural states to contract observations, \tool{} is geared towards synthesizing contracts for instruction-level leaks, \ie those leaks that can be attributed directly to a specific instruction rather than  to a sequence of instructions.
We leave the synthesis of other classes of contracts, e.g., those capturing speculative leaks~\cite{contracts2021,speculation-at-fault}, as future work.

\mypara{Dependence on user-provided contract atoms}
Our approach critically relies on user-provided contract atoms that are able to capture the leakage of the target processor soundly and precisely.
For non-speculative leaks, it is fairly straightforward to define atoms able to soundly capture a processor's leakage, by exposing any information an instruction may depend on.
It is less obvious how to define atoms able to capture leakage precisely, as this requires understanding the actual leakage of the processor.
It is future work to investigate how to automatically derive such atoms.

\mypara{Termination of the synthesis loop}
The contract synthesis algorithm generates a sequence of candidate contracts in a non-monotonic fashion: each new candidate need not be a superset of the previous one.
For instance, $\ctr_2$ from \Cref{sec:overview:processor} is incomparable to $\ctr_1$.
Nevertheless, the synthesis loop is guaranteed to terminate, since each counterexample rules out at least one incorrect candidate, and the number of possible contracts given a template is finite.\looseness=-1

\mypara{(Non-)determinism of synthesis}
For a fixed set of test cases, there may be multiple incomparable sound contracts that have the same optimal precision.
Beyond this inherent non-determinism, however, the synthesis process is deterministic:
While the bounded verification step may generate different counterexamples in different runs, the algorithm will always converge to one of the sound contracts that are optimal with respect to the  test cases generated in Phase~1.\looseness=-1

\revision{\mypara{Leakage contracts and secure programming}
Leakage contracts provide the foundations for writing leak-free software.
As shown by Guarnieri et al.~\cite{contracts2021}, ensuring that program-level secrets do not influence contract traces is sufficient to ensure the absence of microarchitectural leaks for any CPU satisfying the contract.

In our framework, a contract $\contract_S$ is defined via a set $S$ of atoms $(\pi_A, \phi_A)$, where $\phi_A$ indicates what is exposed on the contract trace whenever $\pi_A$ holds.
Different atoms may share the same leakage function, \ie it is possible that $\phi_A = \phi_B$ for $A \neq B$, but different leakage functions must have disjoint images (\Cref{sec:contracts}). 
To guarantee leak freedom at program level, following~\cite{contracts2021}, one needs to ensure that the sequence of contract observations produced by the program is secret independent.
For each leakage function $\phi$, one thus needs to check that its applicability is secret independent, \ie that $\bigvee \{\pi_A \mid A \in S, \phi_A = \phi\}$ is secret independent, and whenever $\phi$ is applicable its value must be secret independent, too.

To check this non-interference property, one can adapt existing constant-time analysis tools~\cite{AlmeidaBBDE16,spectector2020,daniel2020binsec}, which are already able to account for various constant-time policies. 
A standard assumption, exploited by existing tools, is that the program path leaks and may thus not be secret dependent.
This is the case for all contracts synthesized in this work.
However, in principle, our framework allows for contracts that do not leak the outcome of branches, which has also been discussed in the context of efficient constant-time programming recently~\cite{Winderix24, Libra}
Adapting existing tools to handle such contracts in a precise manner may be more challenging.

}

\section{Related work}\label{sec:rw}
\mypara{Leakage verification}
LeaVe~\cite{Wang23} \emph{verifies} a given leakage contract against a RTL processor
 by learning a set of inductive relational invariants. 
Contract Shadow Logic~\cite{shadow-logic} follows a similar approach, but it uses an
explicit state model checker instead of relying on invariants.
While both approaches ensure soundness of contracts, these contracts are user-provided and their precision crucially relies on careful modelling by the user.
In contrast, our work \emph{synthesizes} sound and precise leakage contracts automatically, starting from a high-level contract template describing possible leakage sources. 

Given a hardware design, ConjunCT~\cite{conjunct} and its follow-up VeloCT~\cite{h-houdini} identify a set of
\emph{safe} instructions that can be invoked with secret arguments without timing leaks.
While the focus on identifying safe
instructions allows these tools to scale remarkably well, it also limits
the guarantees they provide. In particular, they only guarantee that programs consisting exclusively of safe instructions will not result in leaks.
However, since branching and memory instructions affect timing, they are not
safe, which severely limits set of programs that can be run securely on the
processor.

Ceesay-Seitz et al.~\cite{mu-cfi} introduce a property called microarchitectural control-flow
integrity ($\mu$-CFI), which checks for undue influences from instruction
operands to the (micro-architectural) program counter using a taint-tracking approach. 
While
$\mu$-CFI captures a number of security violations of processors (\eg certain
speculative execution attacks), it can only check for violations of a fixed
property and cannot express general leakage contracts.

Bloem et al.~\cite{power-contracts} verify leakage contracts for
power side channels. Their technique can verify the soundness of a given
contract, but it does not perform synthesis and offers no precision guarantees.

\mypara{Leakage Testing}
Revizor~\cite{oleksenko2022revizor} and Scam-V~\cite{nemati} check black-box CPUs
against leakage contracts via testing.
White box fuzzers like Phantom Trails~\cite{phantomtrails},
IntroSpectre~\cite{introspectre}, and SpecDoctor~\cite{specdoctor} can detect
speculative execution bugs in CPUs by fuzzing an RTL design, but they cannot
capture general contracts.
While these approaches can find violations for certain classes of leaks, they
cannot prove their absence and do not synthesize leakage descriptions.

\mypara{Synthesis of leakage descriptors}
RTL2muPath~\cite{rtl2mupath,rtl2muspec} synthesizes happens-before models for microarchitectural leaks.
It enumerates the
paths that an instruction can take through the microarchitecture. If different
paths have different timing profiles, the instruction may leak via timing.
The technique therefore records which other instructions may influence the choice of
microarchitectural path, thereby obtaining a form of leakage descriptor. 
While the leakage descriptors in \cite{rtl2mupath} are synthesized semi-automatically, unlike \tool, their synthesis is not guaranteed to be precise, and it is not obvious how to relate their descriptors back to software.  
In practice, the synthesized leakage descriptors may also be unsound, as the underlying model checker calls frequently time out~\cite{rtl2mupath}.

Mohr et al.~\cite{Mohr24} synthesize leakage contracts that are precise, but cannot guarantee their soundness differently from \tool{}.
Moreover, they cannot handle  contract templates where a leakage function is shared by multiple atoms, which \tool{} uses when synthesizing contracts for all our benchmarks in \Cref{sec:evaluation}. %

\section{Conclusion}\label{sec:conclusion}

To use leakage contracts to write software secure against microarchitectural attacks, programmers require sound and precise contracts for existing processors, which we currently lack.
To address this gap, we presented an approach to automatically derive such contracts directly from an RTL processor design, with limited user intervention.
We implemented our approach in the \tool{} synthesis tool, which we used to synthesize sound and precise contracts for six open-source RISC-V cores.
Our results indicate that contracts synthesized by \tool{} are more precise than sound contracts constructed in prior works.
\begin{acks}
We would like to thank Sushant Dinesh for the help in applying VeloCT~\cite{h-houdini} to Ibex.
This project has received funding from the \grantsponsor{1}{European Research Council}{https://erc.europa.eu/} under the European Union's Horizon 2020 research and innovation programme (grant agreements No. \grantnum{1}{101020415}, and \grantnum{1}{101115046}), 
from the \grantsponsor{2}{Spanish Ministry of Science and Innovation}{https://www.ciencia.gob.es/} under the project \grantnum{2}{TED2021-132464B-I00 PRODIGY}, 
from the \grantsponsor{3}{Spanish Ministry of Science and Innovation}{https://www.ciencia.gob.es/} under the Ram\'on y Cajal grant \grantnum{3}{RYC2021-032614-I}, 
from the \grantsponsor{3}{Spanish Ministry of Science and Innovation}{https://www.ciencia.gob.es/} under the project \grantnum{3}{PID2022-142290OB-I00 ESPADA}, 
and from the \grantsponsor{3}{Spanish Ministry of Science and Innovation}{https://www.ciencia.gob.es/} under the project \grantnum{3}{CEX2024-001471-M}.
\end{acks}

\bibliographystyle{ACM-Reference-Format}
\balance
\bibliography{biblio}

\onlyTechReport{
\clearpage
\nobalance
\appendix
\revision{
\section{Proofs}\label{appendix:proofs}

\subsection{Proof of \Cref{thm:template-distinguishability}}

\templatedistinguishability*

\begin{proof}
    Above \Cref{thm:template-distinguishability} in \Cref{sec:synthesis:template-distinguishability} we have argued that including a strongly-dis\-tin\-guish\-ing atom or exactly one of the two atoms in a xor-distinguishing pair is \emph{sufficient} to make a test case contract distinguishable.
    It remains to show that this is also \emph{necessary}.

    Let $t = (\sigma, \sigma')$ be a test case that is contract distinguishable under $\contract_S$.
    The test case must be contract distinguishable due to a \emph{leakage mismatch}, an \emph{applicability mismatch}, or both.

    \smallskip
    \noindent
    \emph{Case 1:} Assume there is leakage mismatch at index $i$.
    Thus, there are atoms $A,B \in S$ with $\phi_A = \phi_B$, $\pi_A(\sigma_i) \wedge \pi_B(\sigma_i)$, and $\phi_A(\sigma_i) \neq \phi_B(\sigma'_i)$.
    As the applicability predicates of atoms with the same leakage function are required to be mutually exclusive (see \Cref{sec:contracts}) no atoms other than $A$ and $B$ may be applicable in $\sigma_i$ and $\sigma'_i$, respectively.
    Thus, $A$ and $B$ are strongly distinguishing.

    \smallskip
    \noindent
    \emph{Case 2:} Assume there is an applicability mismatch at index $i$.
    Thus, there is an atom $A \in S$ s.t. $\pi_A(\sigma_i)$ (or $\pi_A(\sigma'_i)$) and no atom $B \in S$ with $\phi_A = \phi_B$ is applicable in $\sigma'_i$ (or $\sigma_i$).
    Assume w.l.o.g. that $\pi_A(\sigma_i)$ and not $\pi_A(\sigma'_i)$. The other case is symmetric.
    We distinguish two cases:
    \begin{compactenum}
        \item[\emph{Case 2.1:}] There is no atom $B \in \template$ with $\phi_A = \phi_B$ that is applicable in~$\sigma'_i$.
                        Then, atom~$A$ is strongly distinguishing.
        \item[\emph{Case 2.2:}] There is an atom $B \in \template$ with $\phi_A = \phi_B$ that is applicable in~$\sigma'_i$, but it is not included in $S$.
                        Then we further distinguish the following two cases:
                        \begin{compactenum}
                            \item[a)] $\phi_A(\sigma_i) = \phi_B(\sigma'_i)$. Then $(A, B)$ form a xor-distinguishing pair and exactly one of the two atoms is included in the contract.
                            \item[b)] $\phi_A(\sigma_i) \neq \phi_B(\sigma'_i)$. Then $A$ is strongly distinguishing. 
                        \end{compactenum}
    \end{compactenum}
    This concludes the proof.
\end{proof}

\subsection{Proof of \Cref{thm:template-distinguishability-date}}

\templatedistinguishabilitydate*

\begin{proof}
We show in the following Lemmas~\ref{lem:strongimpliesweak}, \ref{lem:weakmayimplystrong}, and \ref{lem:noxor} that if no two distinct atoms share the same leakage function, then 
\begin{enumerate}
    \item the notions of strongly-distinguishing atoms and distinguishing atoms coincide (Lemmas~\ref{lem:strongimpliesweak}+\ref{lem:weakmayimplystrong}), and
    \item there are no xor-distinguishing pairs (Lemma~\ref{lem:noxor}).
\end{enumerate}
Then \Cref{thm:template-distinguishability} reduces to \Cref{thm:template-distinguishability-date}.
\end{proof}

The proofs of Lemmas~\ref{lem:strongimpliesweak}--\ref{lem:noxor} are given below.

\begin{restatable}{lemma}{strongimpliesweak}\label{lem:strongimpliesweak}
    Every {strongly-distinguishing} atom is also {distinguishing}, but a distinguishing atom is not necessarily strongly distinguishing.
\end{restatable}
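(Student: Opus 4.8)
The plan is to prove the two halves of the statement separately, as they are of different character: the implication is a short logical unfolding, whereas the failure of the converse requires exhibiting a concrete counterexample.

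For the forward direction, I would unfold the definition of a strongly-distinguishing atom and then instantiate the universally-quantified atom with $A$ itself. Concretely, assume $A$ is strongly distinguishing and fix an index $i$ witnessing this, say with $\pi_A(\sigma_i)$ (the symmetric case with $\pi_A(\sigma'_i)$ is identical). By definition, no atom $B \in \template$ satisfies $\pi_B(\sigma'_i) \wedge \phi_A(\sigma_i) = \phi_B(\sigma'_i)$; taking $B := A$ yields $\neg(\pi_A(\sigma'_i) \wedge \phi_A(\sigma_i) = \phi_A(\sigma'_i))$. A two-way case split then closes the argument: if $\neg \pi_A(\sigma'_i)$ holds then $\pi_A(\sigma_i) \oplus \pi_A(\sigma'_i)$ holds, and if $\pi_A(\sigma'_i)$ holds then we must have $\phi_A(\sigma_i) \neq \phi_A(\sigma'_i)$, so $\pi_A(\sigma_i) \wedge \pi_A(\sigma'_i) \wedge \phi_A(\sigma_i) \neq \phi_A(\sigma'_i)$ holds. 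Each branch matches one of the two disjuncts in the equivalent characterization of a distinguishing atom recalled just above \Cref{thm:template-distinguishability-date}, so $A$ is distinguishing.

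For the non-implication, I would reuse the template and test case of \Cref{example:template}. Let $A = (\texttt{addi}, \langle\texttt{"imm"}, \myvalue{\texttt{imm}}\rangle)$ and $B = (\texttt{subi}, \langle\texttt{"imm"}, \myvalue{\texttt{imm}}\rangle)$, which share the same leakage function, and take the single-instruction test case $\langle\texttt{addi}\ \texttt{R1}, \texttt{R2}, \texttt{0x1234} \mid \sigma\rangle$ versus $\langle\texttt{subi}\ \texttt{R1}, \texttt{R2}, \texttt{0x1234} \mid \sigma\rangle$. At index $0$ the applicability of $A$ differs between the two executions ($\pi_A(\sigma_0)$ holds while $\pi_A(\sigma'_0)$ does not), so $\pi_A(\sigma_0) \oplus \pi_A(\sigma'_0)$ holds and $A$ is distinguishing. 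Yet $A$ is \emph{not} strongly distinguishing: at index $0$ (the only index where $A$ is applicable in either execution) atom $B$ witnesses $\pi_B(\sigma'_0) \wedge \phi_A(\sigma_0) = \phi_B(\sigma'_0)$, since both sides evaluate to $\langle\texttt{"imm"}, \texttt{0x1234}\rangle$, so the strong-distinguishing condition fails there.

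The main obstacle is simply making the quantifier structure of the strong-distinguishing definition precise and tracking its two symmetric ``(or $\ldots$)'' clauses; once the substitution $B := A$ is made, the forward direction is mechanical. For the counterexample the one point needing care is verifying that $A$ fails the strong-distinguishing condition at \emph{every} index, not just index $0$: since the two executions only involve a single addi/subi instruction, $A$ is applicable only at index $0$, so no other index can even enter the condition.
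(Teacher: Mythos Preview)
Your proposal is correct and follows essentially the same approach as the paper: for the forward direction both arguments instantiate the universally quantified atom $B$ with $A$ itself (the paper does this after an explicit De Morgan step on the full disjunctive formula, you do it directly and then case-split on $\pi_A(\sigma'_i)$), and for the non-implication both use the \texttt{addi}/\texttt{subi} test case from \Cref{example:template}. Your added remark that $A$ is applicable only at index~$0$, so no other index can witness strong distinguishability, is a small point the paper leaves implicit.
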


\begin{proof}
    For the second part of the lemma, consider again Example~\ref{example:template}.
    Both atoms in the example are distinguishing, but they are not strongly distinguishing.
    \par
    For the first part of the lemma, note that by definition atom $A$ is strongly distinguishing if there exists an index $i$, such that
     \begin{eqnarray*}
        \left(\pi_A(\sigma_i) \wedge \neg\exists {B \in \template}: \pi_B(\sigma'_i) \wedge \phi_A(\sigma_i) = \phi_B(\sigma'_i)\right)\\
        \vee \left(\pi_A(\sigma'_i) \wedge \neg\exists {B \in \template}: \pi_B(\sigma_i) \wedge \phi_A(\sigma'_i) = \phi_B(\sigma_i)\right).
     \end{eqnarray*}
    Applying De Morgan's law this is equivalent to 
     \begin{eqnarray*}
        \left(\pi_A(\sigma_i) \wedge \forall {B \in \template}: \neg\pi_B(\sigma'_i) \vee \phi_A(\sigma_i) \neq \phi_B(\sigma'_i)\right)\\
        \vee \left(\pi_A(\sigma'_i) \wedge \forall {B \in \template}: \neg\pi_B(\sigma_i) \vee \phi_A(\sigma'_i) \neq \phi_B(\sigma_i)\right).
     \end{eqnarray*}
    By the fact that $(\forall_{B \in \template} \psi(B)) \implies \psi(A)$ for any atom $A \in \template$ and formula $\psi$ this implies
     \begin{eqnarray*}
        \left(\pi_A(\sigma_i) \wedge (\neg\pi_A(\sigma'_i) \vee \phi_A(\sigma_i) \neq \phi_A(\sigma'_i))\right)\\
        \vee \left(\pi_A(\sigma'_i) \wedge (\neg\pi_A(\sigma_i) \vee \phi_A(\sigma'_i) \neq \phi_A(\sigma_i))\right),
     \end{eqnarray*}
    which is equivalent to the definition of atom $A$ being distinguishing:
     \begin{eqnarray*}
       \pi_A(\sigma_i) \oplus \pi_A(\sigma'_i) \vee (\pi_A(\sigma_i) \wedge \pi_A(\sigma'_i) \wedge \phi_A(\sigma_i) \neq \phi_A(\sigma'_i)).
     \end{eqnarray*}
\end{proof}

\begin{restatable}{lemma}{weakmayimplystrong}\label{lem:weakmayimplystrong}
    If no two distinct atoms share the same leakage function, 
    then {distinguishing} atoms are also {strongly distinguishing}.
\end{restatable}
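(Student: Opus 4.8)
The plan is to prove the lemma by a direct case analysis on the two disjuncts in the definition of a distinguishing atom, showing that in each case the very index $i$ witnessing distinguishability also witnesses strong distinguishability. First I would unfold both definitions side by side. Recall that $A$ is \emph{distinguishing} if there is an index $i$ with $\pi_A(\sigma_i) \oplus \pi_A(\sigma'_i)$ or with $\pi_A(\sigma_i) \wedge \pi_A(\sigma'_i) \wedge \phi_A(\sigma_i) \neq \phi_A(\sigma'_i)$, whereas $A$ is \emph{strongly distinguishing} if there is an index $i$ such that (say) $\pi_A(\sigma_i)$ holds and \emph{no} atom $B \in \template$ satisfies $\pi_B(\sigma'_i) \wedge \phi_A(\sigma_i) = \phi_B(\sigma'_i)$ (plus the symmetric variant with the roles of $\sigma_i$ and $\sigma'_i$ swapped).

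The key observation I would isolate as a preliminary step is that the hypothesis and the template requirements together forbid any cross-atom collision of leakage values. Concretely: the hypothesis states that no two \emph{distinct} atoms share the same leakage function, so $A \neq B$ implies $\phi_A \neq \phi_B$; and the standing template assumption (\Cref{sec:contracts}) that the images of different leakage functions are disjoint then gives $\phi_A(s) \neq \phi_B(s')$ for \emph{all} states $s, s'$ whenever $A \neq B$. This single fact is what makes the argument go through, and flagging that \emph{both} assumptions are needed is the only conceptually delicate point.

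With this in hand I would carry out the two cases. In the disjoint-applicability case, assume without loss of generality $\pi_A(\sigma_i)$ and $\neg\pi_A(\sigma'_i)$ (the other orientation is symmetric). I argue by contradiction that no atom $B$ can satisfy $\pi_B(\sigma'_i) \wedge \phi_A(\sigma_i) = \phi_B(\sigma'_i)$: if $B = A$ then $\pi_A(\sigma'_i)$ would hold, contradicting the case assumption; if $B \neq A$ then the preliminary fact forces $\phi_B(\sigma'_i) \neq \phi_A(\sigma_i)$, again a contradiction. Hence $A$ is strongly distinguishing at the same index $i$. In the leakage-mismatch case, both $\pi_A(\sigma_i)$ and $\pi_A(\sigma'_i)$ hold with $\phi_A(\sigma_i) \neq \phi_A(\sigma'_i)$; a candidate witness $B$ with $\phi_B(\sigma'_i) = \phi_A(\sigma_i)$ is likewise impossible, since $B = A$ would give $\phi_A(\sigma'_i) = \phi_A(\sigma_i)$ and $B \neq A$ would violate the disjointness fact.

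I expect no serious obstacle here — the statement is essentially bookkeeping once the disjointness fact is extracted. The main things to get right are the clean handling of the two symmetric orientations of the applicability mismatch (so as not to do redundant work) and making explicit that the implication relies on \emph{both} hypotheses, so that, together with \Cref{lem:strongimpliesweak} and \Cref{lem:noxor}, the collapse of \Cref{thm:template-distinguishability} into \Cref{thm:template-distinguishability-date} is justified.
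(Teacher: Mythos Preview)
Your proposal is correct and follows essentially the same approach as the paper: both isolate the key fact that the hypothesis together with the disjoint-images requirement from \Cref{sec:contracts} forces $\phi_A(s) \neq \phi_B(s')$ whenever $A \neq B$, and then use this to show the strongly-distinguishing condition holds at the witnessing index. The only cosmetic difference is that you carry out an explicit case split on the two disjuncts with a contradiction argument on the candidate $B$, whereas the paper substitutes the equivalence $\phi_A(\sigma_i) \neq \phi_A(\sigma'_i) \Leftrightarrow \forall B \in \template: \phi_A(\sigma_i) \neq \phi_B(\sigma'_i)$ directly into the distinguishing formula and reads off the result; your version is arguably cleaner in handling the $\pi_A(\sigma_i) \oplus \pi_A(\sigma'_i)$ case, which the paper leaves implicit.
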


\begin{proof}
    Recall the requirement, stated in \Cref{sec:contracts}, that the images of different leakage functions in a given contract template must be disjoint.
    If no two atoms share the same leakage function this in particular implies $\forall A,B \in \template, A \neq B: \phi_A(\sigma_i) \neq \phi_B(\sigma'_i)$ 
    and so we have $\phi_A(\sigma_i) \neq \phi_A(\sigma'_i) \Leftrightarrow \forall B \in \template: \phi_A(\sigma_i) \neq \phi_B(\sigma'_i) \Leftrightarrow \forall B \in \template: \phi_A(\sigma'_i) \neq \phi_B(\sigma_i)$.

    Plugging the above equivalences into the definition of $A$ being distinguishing we get
    \begin{multline*}
       \pi_A(\sigma_i) \oplus \pi_A(\sigma'_i) \vee (\pi_A(\sigma_i) \wedge \pi_A(\sigma'_i) \wedge \\ \forall B \in \template: \phi_A(\sigma_i) \neq \phi_B(\sigma'_i) \wedge \forall B \in \template: \phi_A(\sigma'_i) \neq \phi_B(\sigma_i))
    \end{multline*}   
    which implies that $A$ is strongly distinguishing.
\end{proof}

\begin{restatable}{lemma}{noxor}\label{lem:noxor}
    If no two distinct atoms share the same leakage function, then there are no xor-distinguishing atom pairs.
\end{restatable}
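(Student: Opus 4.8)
The final statement to prove is Lemma~\ref{lem:noxor}: if no two distinct atoms share the same leakage function, then there are no xor-distinguishing atom pairs.

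The plan is to argue directly from the definition of a xor-distinguishing pair, deriving a contradiction with the disjointness-of-images requirement stated in \Cref{sec:contracts}. First I would recall the definition: a pair $\{A, B\} \in \binom{\template}{2}$ with $A \neq B$ is xor-distinguishing for a test case $t = (\sigma, \sigma')$ precisely when there exists an index $i$ such that $\pi_A(\sigma_i) \wedge \pi_B(\sigma'_i)$ holds \emph{and} $\phi_A(\sigma_i) = \phi_B(\sigma'_i)$. The key observation is that the second conjunct, $\phi_A(\sigma_i) = \phi_B(\sigma'_i)$, is an equality between a value in the image of $\phi_A$ and a value in the image of $\phi_B$.

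Next I would invoke the template hypothesis. By assumption, $A$ and $B$ are distinct atoms, and since no two distinct atoms share the same leakage function, we have $\phi_A \neq \phi_B$. The requirement from \Cref{sec:contracts} that the images of different leakage functions be disjoint then gives $\mathit{img}(\phi_A) \cap \mathit{img}(\phi_B) = \emptyset$. But $\phi_A(\sigma_i) \in \mathit{img}(\phi_A)$ and $\phi_B(\sigma'_i) \in \mathit{img}(\phi_B)$, so these two values cannot be equal, contradicting $\phi_A(\sigma_i) = \phi_B(\sigma'_i)$. Hence no such index $i$ exists, and the pair $\{A, B\}$ cannot be xor-distinguishing for any test case. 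Since $A, B$ were arbitrary distinct atoms, there are no xor-distinguishing pairs at all.

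This lemma is essentially a one-line consequence of the disjoint-images requirement, so I do not anticipate any substantive obstacle; the proof is much shorter than those of Lemmas~\ref{lem:strongimpliesweak} and~\ref{lem:weakmayimplystrong}. The only point requiring care is to cite the correct hypothesis — namely that distinct atoms have distinct leakage functions, which (combined with the standing disjointness assumption) forces the images to be disjoint — rather than conflating it with the weaker mutual-exclusivity assumption on applicability predicates, which plays no role here. With Lemmas~\ref{lem:strongimpliesweak}, \ref{lem:weakmayimplystrong}, and~\ref{lem:noxor} in hand, \Cref{thm:template-distinguishability-date} follows immediately by specializing \Cref{thm:template-distinguishability}, as already sketched in its proof body.
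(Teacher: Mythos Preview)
Your argument is correct and is exactly the paper's approach: both rely on the disjoint-images requirement from \Cref{sec:contracts} to rule out the equality $\phi_A(\sigma_i) = \phi_B(\sigma'_i)$ needed for a xor-distinguishing pair. You have simply spelled out in detail what the paper states in one line.
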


\begin{proof}
    This follows immediately from the requirement, stated in \Cref{sec:contracts}, that the images of different leakage functions in a given contract template must be disjoint.
\end{proof}

\subsection{Proof of \Cref{thm:ilp-synthesis}}

\ilpsynthesis*

\begin{proof}
Consider a solution to the ILP and the corresponding set of atoms $S \subseteq \template$ associated to the variables $s_A$ that are set to true in the solution.
Given an attacker-distinguishable test case~$t$, Constraint \ref{eq:coverage} ensures that a strongly-distinguishable atom or a xor-distinguishing atom pair are part of $\contract_S$.
From this and \Cref{thm:template-distinguishability}, $t$ is contract-distinguishable w.r.t. $S$.
Furthermore, the objective function $\min \sum_{t \in T_{nd}}{\textit{fp}_t}$ ensures that $S$ distinguishes as few attacker-indistinguishable test cases as possible.
\end{proof}

\subsection{Proof of \Cref{prop:bounded-verification:correctness-1}}
\boundedverificationcorrectnessa*

 \begin{proof}
    A counterexample $\mathit{cex}$ to $\phi^{C,\psi}_{\mathit{ctrsat}}$ represents an execution of the stuttering product circuit where (a) both sides have the same initial microarchitectural state $\mu_0$ and different architectural states $\sigma, \sigma'$, (b) contract equivalence holds for cycles $0$ to $k'$ whenever the retirement predicate $\psi$ holds on both sides (where $k'$ is the minimum between $k$ and the cycle when $i$ reaches $0$), and (c) $j$ is the smallest cycle  $0 \leq j \leq b$ where the two executions are attacker distinguishable.
    Hence, we can construct the test case $T = (\sigma,\sigma')$.
    Let $n \in \Nat$ be the number of times when $\psi \wedge \psi'$ holds in $\mathit{cex}$ up to cycle $k'$ included, \ie the number of instructions retired by both executions up to $k'$.
    From (b) and the correctness of the stuttering construction~\cite{Wang23}, we get that the prefixes of length $n$ of $C(\implEvalFilter{\phi}((\sigma,\mu_0)))$ and $C(\implEvalFilter{\phi}((\sigma',\mu_0)))$ are the same.
    Similarly, from (c) and the stuttering circuit's construction~\cite{Wang23}, we get that $\atk(\implEval((\sigma,\mu_0))) \neq \atk(\implEval((\sigma',\mu_0)))$.
    Thus, $T$ is attacker distinguishable even though the first $n$ instructions produce the same contract observations w.r.t. $C$.
    Now, we just need to show that $n \geq \mathit{min}(\lfloor \frac{k}{\mathbb{K}}\rfloor,  i)$.
    There are two cases depending on $k'$, \ie the cycle at which \tool{} stops.
    In both cases, w.l.o.g. we assume that the programs in the counterexample do not terminate before $k'$ cycles. %
    If $k' = k$, then the counterexample $\mathit{cex}$ executed at least $\lfloor \frac{k}{\mathbb{K}}\rfloor$ instructions (given our assumption on the maximum retirement time).
    If $k' \neq k$, then the counterexample $\mathit{cex}$ retired at least $i$ instructions (to decrease the instruction counter to $0$). 
    \end{proof}

\subsection{Proof of \Cref{prop:bounded-verification:correctness-2}}

\boundedverificationcorrectnessb*

\begin{proof}
Assume, for contradiction's case, that this is not the case.
That is, there exists a $C$-equivalent test case $T$ consisting of at most $\mathit{min}(\lfloor \frac{k}{\mathbb{K}}\rfloor,  i)$ instructions that is attacker distinguishable in the first $b$ cycles even though the BMC cannot disprove $\phi^{C,\psi}_{\mathit{ctrsat}}$.
From the assumption on the program's size and on retirement time, we know that the programs in $T$ can be fully executed in at most $\mathbb{K}\times \mathit{min}(\lfloor \frac{k}{\mathbb{K}}\rfloor,  i)$ cycles.
From this and the construction of the stuttering circuit~\cite{Wang23}, we know that $T$ maps to an execution of the stuttering circuit that would violate $\phi^{C,\psi}_{\mathit{ctrsat}}$, leading to a contradiction.
\end{proof}

\subsection{Proof of \Cref{prop:phase3:correctness}}

    \overallcorrectness*

    \begin{proof}
        Any contract that passes Phase 3 has been verified by \textsc{LeaVe} and, therefore, it is sound according to~\cite[Theorem 2]{Wang23}.
        Furthermore, the contract distinguishes as few attacker-distinguishable test cases as possible following \Cref{thm:ilp-synthesis}.
    \end{proof}
}

\section{Additional tables}\label{appendix:data}

\Cref{tab:contract:leaks} summarizes the leaks captured by the synthesized contracts in the experiment from \textbf{RQ1}.
\newcolumntype{C}[1]{>{\centering\arraybackslash}m{#1}}

\newcommand{\cmark}{\ding{51}}%
\newcommand{\xmark}{\ding{55}}%
\newcommand{\ctrp}{$\circ$\xspace}
\newcommand{\ctrf}{\cmark\xspace}
\newcommand{\ctrn}{\xmark\xspace}
\renewcommand{\ctrp}{$\exists$\xspace}
\renewcommand{\ctrf}{$\forall$\xspace}
\renewcommand{\ctrn}{$\nexists$\xspace}
\def\downcirc{\raisebox{1pt}{\scalebox{0.6}{\rotatebox[origin=c]{90}{\LEFTcircle}}}}
\makeatletter
\DeclareRobustCommand{\circbullet}{\mathbin{\vphantom{\circ}\text{\circbullet@}}}
\newcommand{\circbullet@}{%
  \check@mathfonts
  \m@th\ooalign{%
    \clipbox{0 0 0 {\dimexpr\height-\fontdimen22\textfont2}}{$\bullet$}\cr
    $\circ$\cr
  }%
}
\renewcommand{\ctrp}{{\Large$\circbullet$}\xspace}
\renewcommand{\ctrf}{{\Large$\bullet$}\xspace}
\renewcommand{\ctrn}{{\Large$\circ$}\xspace}
\newcommand{\ctra}{-}

\begin{table*}[h]
    \caption{The table shows the leaks captured by the synthesized contract.}\label{tab:contract:leaks}%
    \centering
    \begin{threeparttable}
        \small
    \begin{tabular}{lC{1cm}C{1cm}C{1cm}C{1cm}C{1cm}C{1cm}C{1cm}}
        \toprule
    \textit{Processor} &
    \rot{\begin{tabular}[c]{@{}l@{}}Branch \& Jump instr.:\\ \texttt{BRANCH_TAKEN}\tnote{a}\end{tabular}} & 
    \rot{\begin{tabular}[c]{@{}l@{}}\texttt{div}, \texttt{divu}, \texttt{rem}, \texttt{remu}:\\ \texttt{REG\_RS2\_ZERO} \end{tabular}} & 
    \rot{\begin{tabular}[c]{@{}l@{}}Load \& Store instr.:\\ distinguishable\end{tabular}} & 
    \rot{\begin{tabular}[c]{@{}l@{}}Load \& Store instr.:\\ alignment\end{tabular}} & 
    \rot{\begin{tabular}[c]{@{}l@{}}\texttt{mul}, \texttt{mulh}, \texttt{mulhu}, \texttt{mulhsu}:\\ distinguishable\end{tabular}} &
    \rot{\begin{tabular}[c]{@{}l@{}}Load instr.:\\ \texttt{MEM_ADDR}\end{tabular}} &
    \rot{\begin{tabular}[c]{@{}l@{}}\texttt{mul}:\\ \texttt{REG\_RS2\_LOG2}\end{tabular}} \\
    \midrule
    \DarkRiscvTwoStages{}   & \ctrf &   &   &                                                     &   &   &   \\
    \DarkRiscvThreeStages{} & \ctrf &   &   &                                                     &   &   &   \\
    \SodorTwoStages         & \ctrf &   &   &                                                     &   &   &   \\
    \IbexSmall{}            & \ctrf & \ctrf & \ctrf & \ctrf                                                   & \ctrf &   &   \\
    \IbexCache{}            & \ctrf & \ctrf & \ctrf & \ctrp\tnote{b} & \ctrf & \ctrf &   \\
    \IbexMul{}              & \ctrf & \ctrf & \ctrf & \ctrf                                                   & \ctrf &   & \ctrf \\
    \bottomrule
    \end{tabular}
    \begin{tablenotes}
        \footnotesize
        \item[a] For jump instructions, there are other atoms that yield the same precision (w.r.t. to our evaluation set). Therefore, in some runs (depending on the initial test cases) the contract may expose slightly different atoms.
        \item[b] Store instructions only, load instructions leak \texttt{MEM_ADDR} and the type of the load.
    \end{tablenotes}
\end{threeparttable}
\end{table*}

\section{Property encoding in Verilog } 
\label{appendix:property}

\Cref{fig:verilog-encoding} depicts the Verilog property encoding bounded contract satisfaction checked by \tool{}.
The property is expressed on top of the product circuit constructed as outlined in \Cref{sec:synthesis:bounded-verification}.

\begin{figure*}
    \begin{lstlisting}[style={verilog-style}]
    wire atk_equiv = ... // expression encoding that the attacker observations are the same on the current cycle
    wire ctr_equiv = ... // expression encoding that if the retirement predicate holds in the current cycle
    // in the two executions, then the contract observations must be the same

    // property checked
    reg [31:0] counter = b;
    reg [31:0] instr_counter = i;
    reg state_atk_equiv = 1 ;
    always @ (posedge clock) begin
        if (counter > 0) begin
            counter <= counter - 1;
            state_atk_equiv <= state_atk_equiv && atk_equiv;
        end
        // There is an attacker conflict and a new instruction is retired
        if (state_atk_equiv == 0 && Retire_left && Retire_right) begin
            instr_counter <= instr_counter - 1;
        end
    end
    // initialize microarchitectural states
    assume property (init_state_equiv); 
    // assume contract equivalence until the instruction counter equals to 0
    assume property (instr_counter == 0 || ctr_equiv); 
    // configuration of the processor
    assume property (state_invariant); 
    // check the attacker equivalence from cycle 0 to b
    assert property (instr_counter != 0 || state_atk_equiv); 
    \end{lstlisting}
    \vspace{-10pt}
    \caption{Encoding of the bounded contract satisfaction property in Verilog}\label{fig:verilog-encoding}
\end{figure*}

}
\end{document}